\begin{document}

\mainmatter  % start of an individual contribution

\thispagestyle{plain}
\pagestyle{plain}

\title{Bisimulation Equivalence of
		First-Order Grammars\thanks{This paper extends the version
	contained in Proc. of ICALP'14.}}

	\author{Petr Jan\v{c}ar}

\institute{Dept Comp. Sci., FEI, Techn. Univ. of Ostrava
	(V\v{S}B-TUO),\\
	17. listopadu 15, 70833 Ostrava, Czech Rep.\\
\mailpj
}

\maketitle

\begin{abstract}
A decidability proof for bisimulation equivalence 
of first-order grammars
(finite sets
of labelled rules for rewriting roots of first-order terms) is
presented. The equivalence generalizes the  
DPDA
(deterministic pushdown automata) equivalence, and the result corresponds
to the result achieved by  S\'enizergues (1998, 2005) in the framework of
equational graphs, or of 
PDA with
restricted $\varepsilon$-steps. The framework
of classical first-order terms seems particularly useful for 
providing a 
proof that should be understandable for a wider audience.
We also discuss an extension to branching bisimilarity,
announced by Fu and Yin (2014).
\end{abstract}

\section{Introduction}\label{sec:intro}

Decision problems for  semantic equivalences 
have been a frequent topic in computer science. 
E.g.,
for pushdown automata (PDA) 
\emph{language equivalence} was quickly shown undecidable,
while 
the decidability in the case of deterministic PDA (DPDA) 
is 
a famous result 
by S\'enizergues~\cite{Senizergues:TCS2001}.
A finer equivalence, called \emph{bisimulation equivalence} or
\emph{bisimilarity}, has emerged as another fundamental behavioural
equivalence\,; for deterministic systems it essentially coincides with
language equivalence.
We name~\cite{BBK2} to exemplify the 
first decidability results for infinite-state systems,
and 
refer to~\cite{Srba:Roadmap:04} for a survey of a specific area.

One of the most involved results in the area~\cite{Seni05}
shows the decidability of bisimilarity
of equational graphs with finite out-degree
(or of PDA with deterministic popping
$\varepsilon$-steps); this generalizes the result for DPDA.
The recent nonelementary lower bound~\cite{BGKM12} for the problem
is, in fact, TOWER-hardness in the terminology
of~\cite{Schmitz2013}, and it holds
even for real-time PDA, i.e. PDA with no $\varepsilon$-steps. 
For the full above mentioned PDA 
the 
problem is even not primitive recursive, 
since it is Ackermann-hard~\cite{DBLP:conf/fossacs/Jancar14}.
In the deterministic case, the equivalence problem is known to be
PTIME-hard, and
has a primitive recursive upper bound shown 
by Stirling~\cite{Stir:DPDA:prim}; a finer analysis places
the problem in TOWER~\cite{DBLP:conf/fossacs/Jancar14}.
This complexity gap is just one indication that the respective
fundamental equivalence problems are far from being fully understood.
Another such indication might be the length and 
the technical nature of the so
far published proofs (including the unpublished~\cite{stirling-pda-00}).

This paper is an attempt to make a further step 
in clarifying the main decidability proof in the mentioned area.
It provides
a self-contained
decidability proof for bisimulation equivalence 
in labelled transition systems generated by \emph{first-order
grammars} (FO-grammars), which seems to be a particularly convenient
formalism. The states 
are here first-order terms over a
specified finite set of function symbols (or ``nonterminals''); the
transitions are induced by 
a finite set
of labelled rules that allow to rewrite the roots of terms.
This framework 
is equivalent
to the framework of~\cite{Seni05}; cf., e.g.,~\cite{CourcelleHandbook}
for the early references, or~\cite{JancarLICS12}
for a concrete transformation of PDA 
to FO-grammars (which is also given 
in Appendix here). 
The proof in this paper
is in principle based on the same high-level ideas as the proof
in~\cite{Seni05} but it is 
shorter and simpler; 
we do not provide a detailed comparison here.
This paper is also a (self-contained) continuation of~\cite{JancarLICS12}
where the first-order
term framework was used to give
a decidability proof in the deterministic case.

Related work is also discussed in
Section~\ref{sec:addrem}, where we address the extension of 
decidability to
branching bisimilarity, studied recently  
by Y. Fu and Q. Yin~\cite{yuxi-pdadecid-14}.

\begin{quote}
	{\small
		\emph{Remark.} Some parts are formatted as this
		remark; they contain additional details and comments.
		The aim of this paper
		is to make
		the proof easily understandable, not technically
		shortest.
	}
\end{quote}

\section{Preliminaries and Result}\label{sec:prelim}

In this section we define the basic notions and 
state the result.
Some standard definitions are restricted 
when we do not need the full generality.

By $\Nat$ we denote the 
set $\{0,1,2,\dots\}$ of nonnegative integers; we use 
$[i,j]$ to denote the set $\{i,i{+}1,\dots,j\}$.
For a set $\calA$, by $\calA^*$ we denote the set of finite
sequences of elements of $\calA$, which are also called \emph{words}
(over $\calA$).
By $|w|$ we denote the
\emph{length} of 
$w\in \calA^*$. 
By $\varepsilon$ we denote the \emph{empty sequence}
(hence $|\varepsilon|=0$).

\textbf{LTSs.}
A \emph{labelled transition system} (an LTS) 
is a tuple 
\begin{center}
$\calL=(\calS,\act,(\gt{a})_{a\in{\act}})$
\end{center}
where $\calS$ is a \emph{finite or countable}
set of \emph{states},
$\act$ is a finite 
set of \emph{actions} (or \emph{letters}),
and $\gt{a}\subseteq \calS\times\calS$ is a set of
\emph{$a$-transitions} (for each $a\in\act$). 
In fact, we only deal with  
image-finite LTSs, where 
$\calL=(\calS,\act,(\gt{a})_{a\in{\act}})$ is \emph{image-finite}
if the set $\{s'\mid s\gt{a}s'\}$ is finite for each 
pair $s\in\calS$, $a\in\act$.  We say that $\calL$ is 
a \emph{deterministic LTS} if for each pair 
$s\in\calS$, $a\in\act$ there is 
 at most one $s'$ such that $s\gt{a}s'$.

By $s\gt{w}s'$, where 
$w=a_1a_2\dots a_n\in
\act^*$,
we denote 
that there is a \emph{path}
$s=s_0\gt{a_1}s_1\gt{a_2}\cdots\gt{a_n}s_n=s'$;
if $s\gt{w}s'$, then  
$s'$ is \emph{reachable from} $s$, within $|w|$ steps.
By $s\gt{w}$ we denote that $w$ is
\emph{enabled by} $s$, i.e., $s\gt{w}s'$ for some $s'$.
A \emph{state} $s$ is \emph{dead} if 
there is no $a\in\act$ such that $s\gt{a}$.

If $\calL$ is deterministic, 
then by $s\gt{w}s'$ or $s\gt{w}$ we also denote the respective
unique path.
 
\textbf{(Stratified) bisimilarity.}
Let  $\calL=(\calS,\act,(\gt{a})_{a\in\act})$ be a given
LTS.
We say that a \emph{set} $\calB\subseteq \calS\times\calS$  
\emph{covers} 
$(s,t)\in  \calS\times\calS$ if 
\begin{itemize}
	\item		
for any $a\in\act$ and $s'\in\calS$ such that 
$s\gt{a}s'$ there is $t'\in\calS$
such that
$t\gt{a}t'$ and 
$(s',t')\in \calB$, and
\item
for any $a\in\act$ and $t'\in\calS$ such that 
$t\gt{a}t'$ there is $s'\in\calS$
such that
$s\gt{a}s'$ and 
$(s',t')\in \calB$.
\end{itemize}
We note that if $s,t$ are dead states, 
then $(s,t)$ is covered by any $\calB\subseteq \calS\times\calS$, 
in particular by $\emptyset$.
If there is an action $a\in\act$ that is enabled by precisely one
of $s,t$, then  $(s,t)$ is not covered by
any $\calB$.

For $\calB, \calB'\subseteq \calS\times\calS$
we say that $\calB'$ \emph{covers} $\calB$ if $\calB'$
covers each $(s,t)\in \calB$.
A set $\calB\subseteq \calS\times\calS$
is a \emph{bisimulation} if $\calB$ covers $\calB$.
States $s,t\in\calS$ are \emph{bisimilar},
written 
$s\sim t$,
if there is a bisimulation
$\calB$ containing $(s,t)$. 
We note the standard 
fact that 
$\sim\,\subseteq \calS\times\calS$
is the maximal
bisimulation, the union of all bisimulations.

We put $\sim_0=\calS\times\calS$. For $k\in\Nat$,
$\sim_{k+1}\subseteq\calS\times\calS$ is the set of all pairs 
covered by $\sim_{k}$. 
We easily verify that $\sim$ and $\sim_k$ are equivalence relations, and
that 
$\sim_0\,\supseteq\,
\sim_{1}\,\supseteq\,\sim_2\,\supseteq\,\cdots\cdots\supseteq \,\sim$.
For the (first infinite) ordinal $\omega$ we put 
$s\sim_\omega t$ if $s\sim_k t$ for all $k\in\Nat$; hence 
$\sim_\omega=\cap_{k\in\Nat}\sim_k$.
It is a standard fact
that
 $\cap_{k\in\Nat}\sim_k$ is a bisimulation in any 
 image-finite LTS, where we thus have
$\sim\,=\,\sim_\omega$.

\textbf{Eq-levels.} 
Given an image-finite LTS,
we attach 
the \emph{equivalence
level} (eq-level) to each pair of states:
\begin{center}
$\eqlevel(s,t)=\max\,\{k\in\Nat\cup\{\omega\}\mid s\sim_k t\}$.
\end{center}

\textbf{First-order-term LTSs informally.}
We focus on certain (image-finite) LTSs in which states are
first-order terms\,; we mean standard finite terms primarily
but 
it will turn out convenient to consider also infinite regular terms
(i.e. infinite terms with only finitely many pairwise different subterms).
The terms are built from \emph{variables}
from a fixed countable set
\begin{center}
$\var=\{x_1,x_2,x_3,\dots\}$
\end{center}
and from 
\emph{function symbols}, also called \emph{(ranked) nonterminals},
from some specified finite set $\calN$; each $A\in\calN$ has 
$\arity(A)\in\Nat$. We use $A,B,C,D$  for
nonterminals, while $E,F,\dots$ (possibly with subscripts etc.)
are reserved for terms.
\begin{figure}[t]
\centering
\includegraphics[scale=0.4]{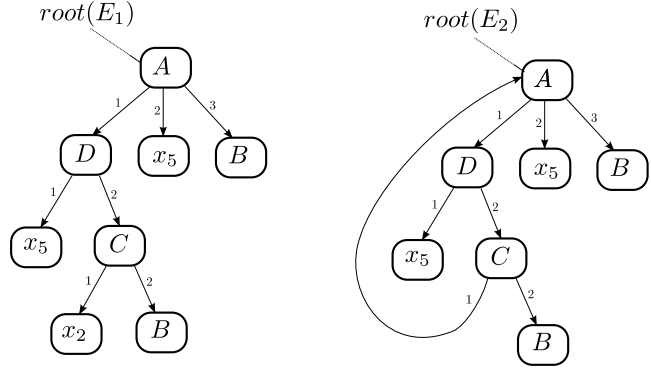}
\caption{Syntactic tree of $E_1=A(D(x_5,C(x_2,B)),x_5,B)$, and 
a graph presenting $E_2$}
\label{fig:basicterm}
\end{figure}
An example of a (standard finite) term is  
$E_1=A(D(x_5,C(x_2,B)),x_5,B)$
where the arities of $A,B,C,D$ are $3,0,2,2$, respectively.
The left-hand side of Fig.~\ref{fig:basicterm} depicts the syntactic
tree of $E_1$. (The right-hand side $E_2$ will be referred to later.)

Transitions are determined by a finite set of 
\emph{root-rewriting} rules. 
An example of a ``non-popping'' rule 
is $A(x_1,x_2,x_3)\gt{a}C(D(x_3,B),x_2)$, an
example of a ``popping'' rule is $A(x_1,x_2,x_3)\gt{b}x_1$.
Each rule induces the transitions arising by applying the same
substitution $\sigma$ to both the left-hand side (lhs) and the
right-hand side (rhs) of the rule. E.g., 
the rule  
\begin{center}
$A(x_1,x_2,x_3)\gt{a}C(D(x_3,B),x_2)$ and the substitution
$\sigma$ for which
$\sigma(x_1)=D(x_5,C(x_2,B))$,
$\sigma(x_2)=x_5$,
$\sigma(x_3)=B$ 
\end{center}
(where
$A(x_1,x_2,x_3)$ after applying $\sigma$ becomes 
$A(D(x_5,C(x_2,B)),x_5,B)$) induce the transition 
$A(D(x_5,C(x_2,B)),x_5,B)\gt{a} C(D(B,B),x_5)$ depicted
in Fig.~\ref{fig:basictransition}.
\begin{figure}[t]
\centering
\includegraphics[scale=0.4]{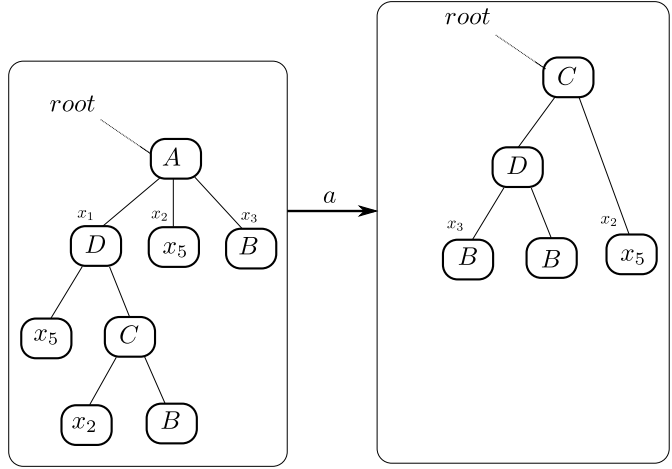}
\caption{Transition generated by 
	$A(x_1,x_2,x_3)\gt{a}C(D(x_3,B),x_2)$ and $\sigma$ (in the
text)}
\label{fig:basictransition}
\end{figure}
Fig.~\ref{fig:basictransition} depicts an $a$-transition between two states, where 
the states in our LTSs are terms.
The small symbols $x_1,x_2,x_3$ are superfluous here, they just 
  depict the original variables in the lhs and in the rhs
of the respective rule; these variables have been replaced by applying
the substitution $\sigma$.
Hence $x_3$ and $x_2$ in the target-term have been replaced by the third root-successor and by the second
root-successor of the source-term, respectively. In this concrete case
the first root-successor of the source-term ``disappears'' since 
$x_1$ does not occur in the rhs of the rule.

Another example can be given by the rule 
$A(x_1,x_2,x_3)\gt{b}x_1$ and the above $\sigma$, which induces 
the transition, or the one-step path,
$F\gt{b}H$ where $F=A(D(x_5,C(x_2,B)),x_5,B)$ and 
$H= D(x_5,C(x_2,B))$.
In this case our one-step path 
exposes a root-successor
$H$ in the
source term $F$
(the first root-successor in our case);
the path has thus ``sinked'' to a subterm in depth $1$.

\textbf{The result informally.}
We will show that there is an algorithm that computes
$\eqlevel(T_0,U_0)$ when given a finite set of root-rewriting rules
and two terms $T_0,U_0$.
In the rest of this section we formalize this statement,
 making also
some conventions about our use of (finite and infinite) 
terms and substitutions.

\textbf{Regular terms, presentation size.}
We identify terms with their syntactic trees, and denote them by
$E,F,\dots$. Thus a \emph{term $E$ over}
$\calN$ (where $\calN$ is a set of ranked nonterminals) 
is a rooted, ordered, finite or infinite tree where each node
has a label from $\calN\cup\var$; if the label of a node is $x_i\in\var$, 
then the node has no successors, and if the label is $A\in\calN$, then 
it has $m$ (immediate) successor-nodes where $m=\arity(A)$.
More precisely, a term corresponds to a set of isomorphic trees, since
two isomorphic trees represent the same term.
Each node is also the root of a \emph{subterm} of $E$, i.e., of the
subtree rooted in this node; more precisely, each concrete node 
is the root of a \emph{subterm-occurrence}, 
since a subterm corresponds to a set of isomorphic subtrees.
A subterm can thus have more (maybe infinitely
many) occurrences in $E$. Each \emph{subterm-occurrence} has
its (nesting) \emph{depth in} $E$, which is its (naturally defined) 
distance from the root of $E$; the term $E$ itself is a
subterm-occurrence with depth $0$.
E.g., in $E_1=A(D(x_5,C(x_2,B)),x_5,B)$ 
(in Fig.~\ref{fig:basicterm})
there 
is one occurrence of the
term $C(x_2,B)$, with depth $2$, and two occurrences of $x_5$, with
depths $1$ and $2$.

We also use the standard notation: a term is either $x_i$ or
$A(G_1,\dots, G_m)$; if $E=A(G_1,\dots, G_m)$, then 
$\termroot(E)=A\in\calN$, $m=\arity(A)$, and
$G_1,\dots,G_m$ are the \emph{root-successors}, i.e.,
the ordered subterm-occurrences with depth $1$.

A \emph{term} $E$ is \emph{finite} if the respective tree is finite; by
$\depth(E)$ we then mean the largest depth of a subterm-occurrence in $E$. 
For $E_1=A(D(x_5,C(x_2,B)),x_5,B)$
in Fig.~\ref{fig:basicterm}
we thus have $\depth(E_1)=3$.

A (possibly infinite) \emph{term} is \emph{regular}
if it has only finitely many subterms (though the subterms may be infinite and
can have infinitely many occurrences). 
Any regular term has a natural \emph{finite-graph
presentation} (with possible cycles).
E.g., the right-hand side of Fig.~\ref{fig:basicterm} presents a
regular term $E_2$; here the term $E_2$ itself is a subterm 
with infinitely many
occurrences (with depths $0,3,6,\dots$).
By $\pressize(E)$ (the presentation 
size of
$E$) we mean the size
of the smallest graph presentation of $E$.

\begin{quote}
{\small
We can be more precise, though the respective notions are standard.
A \emph{finite-graph
presentation} of a (regular) term over $\calN$ is a finite directed
(multi)graph, 
with a designated root,
where each node has a
label from $\calN\cup\var$; if the label of a node is $x_i\in\var$,
then the node has no outgoing arcs, and if the label is  $A\in\calN$,
then the node has $\arity(A)$ ordered outgoing arcs. 
The standard ``tree-unfolding'' of the graph is the respective term, 
which is infinite if there are cycles in the graph.
We can obviously effectively
compare if two graph presentations
represent the same term.
Given a presentation of a regular term $E$, we can thus 
compute the \emph{syntactic graph of} $E$, i.e., 
the graph whose nodes
(one-to-one) correspond to the (roots of) subterms occurring in $E$.
(E.g., the syntactic graph of $E_2$ in Fig.~\ref{fig:basicterm} arises from
the given graph presentation by merging the nodes with label $x_5$
and merging those with label $B$; but we note that if we replaced $D$ with $C$, we
\emph{could not} merge the nodes with label $C$.)
We can take the number of nodes
of the syntactic graph of $E$ as $\pressize(E)$.
}
\end{quote}
In what follows, by a ``term'' we mean a ``regular term''
if we do not say explicitly that the term is finite.
(We do not consider non-regular terms.)
We reserve symbols $E,F,G,H$, and also $T,U,V,W$, for denoting
(regular) terms. 

\textbf{Substitutions, associative composition.}
By $\trees_{\calN}$ we denote the set of all (regular) terms over
a set $\calN$ of (ranked) nonterminals.
A \emph{substitution} $\sigma$ is a mapping
\begin{center}
$\sigma:\var\rightarrow\trees_{\calN}$ whose 
\emph{support}
$\support(\sigma)=\{x_i\mid \sigma(x_i)\neq x_i\}$
\end{center}
is \emph{finite};
we reserve the symbol $\sigma$ for substitutions.
By $\range(\sigma)$ we mean the set $\{\sigma(x_i)\mid
x_i\in\support(\sigma)\}$.
By \emph{applying a substitution} $\sigma$ {to 
a term} $E$ we get the term $E\sigma$ 
that arises from $E$ by replacing each occurrence of $x_i$ with
$\sigma(x_i)$.
Hence $E=x_i$ implies
$E\sigma=x_i\,\sigma=\sigma(x_i)$; we \emph{prefer the
notation}
$x_i\sigma$ to $\sigma(x_i)$.

The 
\emph{composition of substitutions}, where
$\sigma=\sigma_1\sigma_2$ satisfies
$x_i\sigma=(x_i\sigma_1)\sigma_2$, 
can be easily verified to be
associative. We thus write simply $E\sigma_1\sigma_2$ when meaning 
$(E\sigma_1)\sigma_2$ or  $E(\sigma_1\sigma_2)$. 

\textbf{First-order grammars.}
A \emph{first-order grammar}, an \emph{FO-grammar} or
just a \emph{grammar} for short, is a tuple
$\calG=(\calN,\act,\calR)$ where 
$\calN$
is a finite set of 
ranked \emph{nonterminals}, viewed as function symbols with
arities, $\act$
is a finite set of \emph{actions} (or letters), 
and $\calR$
is
a finite set of 
\emph{rules} of the form
\begin{center}
$A(x_1,x_2,\dots, x_m)\gt{a} E$
\end{center}
where $A\in \calN$, $\arity(A)=m$, 
$a\in\act$,
and $E$ is a
\emph{finite}  
term over $\calN$ 
\emph{in which each occurring variable 
is
from the set} $\{x_1,x_2,\dots,x_m\}$.

\textbf{Rule-based and action-based LTSs generated by grammars.}
Given $\calG=(\calN,\act,\calR)$, 
by $\calL^{\ltsrul}_{\calG}=(\trees_{\calN},\calR,(\gt{r})_{r\in\calR})$
we denote the (\emph{rule based}) LTS
where each rule $r$ of the form
$A(x_1,x_2,\dots, x_m)\gt{a} E$ 
induces
	$(A(x_1,\dots, x_m))\sigma\gt{r}E\sigma$
for any substitution $\sigma$.
\\
(Hence also $A(x_1,\dots, x_m)\gt{r}E$, due to $\sigma$
with $\support(\sigma)=\emptyset$.)

\begin{quote}
{\small
Speaking in an informal 
``operational'' manner,
we can apply a rule $r$ of the form  $A(x_1,\dots, x_m)\gt{r}E$
to (a graph-presentation of)
$F$ iff 
$\termroot(F)=A$.
If so, and $E=x_j$, then the
target of the $j$-th outgoing arc of the root of $F$
(which might be the root itself in the case of a loop) is the root of 
$H$ where $F\gt{r}H$.
If $E\not\in\var$, we get $H$ (for which $F\gt{r}H$)
by adding (a fresh copy of) $E$
to $F$ where the root of $E$ becomes the root of the arising $H$; 
to finish the construction of $H$, 
each
arc in
$E$ leading to a node labelled with $x_j$ is redirected to 
the $j$-th root-successor in $F$. 
Hence the variables $x_1,\dots,x_m$ in the rules serve just as
``place-holders'' for root-successors (in the source term of a
transition); recall again Fig.~\ref{fig:basictransition}.
}
\end{quote}
The LTS $\calL^{\ltsrul}_{\calG}$ is deterministic, since for each $F$
and $r$ there is at most one $H$ such that $F\gt{r}H$.
Hence $F\gt{w}$, for $w\in\calR^*$, refers to a unique path in
$\calL^\ltsrul_\calG$ (which is later technically convenient).

\begin{quote}
{\small
We stress explicitly that \emph{transitions cannot add variables},
i.e., $F\gt{w}H$ implies that
each variable occurring in $H$ also occurs in $F$ (though not vice
versa in general; recall that the first root-successor 
``disappeared'' by the transition in Fig.~\ref{fig:basictransition},
which also caused that the subterm $x_2$ ``disappeared'').
We also note that  $F\gt{w}H$ implies $F\sigma\gt{w}H\sigma$ for any
substitution $\sigma$; this follows from the fact that $x_i$ are dead,
 not enabling any action.
\\
Finally we observe that our stipulation that the 
right-hand sides (rhs) $E$ in the grammar-rules
$A(x_1,\dots,x_m)\gt{a}E$
are finite implies
that
\emph{all terms reachable from a finite term} are \emph{finite}.
(It turns out technically convenient to have the rhs finite 
while including regular terms into our LTSs.)
}
\end{quote}
By the \emph{action-based} LTS, related to a grammar
$\calG=(\calN,\act,\calR)$,  
we mean the LTS  $\calL^{\ltsact}_{\calG}=(\trees_{\calN},\act,(\gt{a})_{a\in\act})$
where each rule $A(x_1,\dots, x_m)\gt{a}E$
induces  
\begin{center}
$(A(x_1,\dots, x_m))\sigma\gt{a}E\sigma$ 
\end{center}
for any 
substitution 
$\sigma$.

Hence $F\gt{w}H$ in $\calL^\ltsrul_\calG$ implies 
 $F\gt{\lab(w)}H$ in $\calL^\ltsact_\calG$, where
 $\lab(w)$ is the naturally defined \emph{action-image} of $w$:
the homomorphism $\lab:\calR^*\rightarrow\act^*$ is defined by putting
$\lab(r)=a$ for any rule $r$ of the form $A(x_1,\dots,x_m)\gt{a}E$.

We note that $\calL^{\ltsact}_{\calG}$ is image-finite,  
and nondeterministic in general.
In fact, we still \emph{complete the definition of} 
$\calL^{\ltsact}_{\calG}$ \emph{by stipulating that}
\begin{center}
\emph{no $\calB\subseteq\trees_\calN\times\trees_\calN$ covers $(x_i,H)$ or $(H,x_i)$ 
when
$H\neq x_i$.}
\end{center}
We thus have that
\begin{center}
$x_i\neq H$ implies $x_i\not\sim_1 H$, i.e., 
$\eqlevel(x_i,H)=0$.
\end{center}
In particular we have
$x_i\not\sim_1 x_j$ for $i\neq j$.
Technically we think of each used variable $x\in\var$ 
as being equipped with its unique action $a_{x}$ and with the
transition $x\gt{a_{x}}x$ in $\calL^\ltsact_\calG$;
this entails that $x_i\not\sim_1 H$
for $H\neq x_i$
without any special stipulation. 

\smallskip

\emph{Convention.}
Whenever we consider $F\gt{w}H$ in 
$\calL^\ltsact_\calG$, we tacitly assume that no 
special transitions $x\gt{a_x}x$ are involved.
Hence $F\gt{w}H$ implies $F\sigma\gt{w}H\sigma$ for any substitution
$\sigma$. We also stipulate that $\emptyset$
covers $(x_i,x_i)$, thus avoiding superfluous technicalities.

\begin{quote}
{\small
The stipulation $x_i\not\sim_1 H$ for $H\neq x_i$
reflects the fact that $x_i\neq H$ implies that
$x_i\sigma\not\sim_1 H\sigma$ for
some $\sigma$, unless the underlying grammar $\calG$
is trivial.
The special transitions $x\gt{a_x}x$ are just one technical possibility 
how to reflect this fact in $\calL^\ltsact_\calG$ smoothly.
}
\end{quote}
In what follows we refer to the action-based LTSs
$\calL^{\ltsact}_{\calG}$, 
if we do not say explicitly that we have $\calL^{\ltsrul}_{\calG}$ in
mind. 

\begin{theorem}\label{th:bisdecid}
There is an algorithm that, given an FO-grammar
$\calG=(\calN,\act,\calR)$ and
$T_0,U_0\in\trees(\calN)$,
computes $\eqlevel(T_0,U_0)$ %(and thus also decides if $E_0\sim F_0$
in  $\calL^{\ltsact}_{\calG}$.
\end{theorem}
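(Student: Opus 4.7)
The plan is to compute $\eqlevel(T_0,U_0)$ by running two semi-algorithms in parallel, exactly one of which terminates. One detects the finite value in the case $T_0\not\sim U_0$; the other detects $T_0\sim U_0$ (and then outputs $\omega$). Because $\calL^{\ltsact}_\calG$ is image-finite we have $\sim\,=\,\sim_\omega$, so $T_0\not\sim U_0$ holds iff $T_0\not\sim_k U_0$ for some $k\in\Nat$, which makes the dichotomy clean and confirms that exactly one branch will stop.

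For the inequivalence branch I would proceed stage by stage. At stage $k$, I compute the finite set $S_k$ of terms reachable in at most $k$ action-steps from $\{T_0,U_0\}$; this is effective and finite because $T_0,U_0$ are finite, $\calR$ is finite, and each rule's right-hand side is a finite term, so terms in $S_k$ have size bounded by a function of $|\calG|$, $\pressize(T_0)+\pressize(U_0)$, and $k$. On $S_k\times S_k$ I compute $\sim_0,\sim_1,\dots,\sim_{k+1}$ by the usual partition-refinement recipe, taking account of the special stipulation that $(x_i,H)$ with $H\neq x_i$ is uncovered. The moment I detect $T_0\not\sim_{k+1} U_0$, I halt and output $k$.

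For the equivalence branch I would enumerate finite candidate \emph{bisimulation bases}: finite sets $\calB$ of pairs $(E,F)$ of regular terms (given by finite graph presentations). A base is checked for two properties. First, $(T_0,U_0)$ must be \emph{derivable} from $\calB$: either $T_0=U_0$, or $(T_0,U_0)=(E\sigma,F\sigma)$ for some $(E,F)\in\calB$ and some substitution $\sigma$, or more generally obtainable by applying such a decomposition throughout the syntactic graphs. Second, $\calB$ must be \emph{closed under transitions}: for every $(E,F)\in\calB$ and every $E\gt{a}E'$ there is $F\gt{a}F'$ with $(E',F')$ derivable from $\calB$, and symmetrically. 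The correctness of this scheme rests on the congruence property of $\sim$ under substitution (if $x_i\sigma\sim x_i\tau$ for all relevant $i$ then $E\sigma\sim E\tau$): a closed base that derives $(T_0,U_0)$ generates a genuine bisimulation containing the pair.

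The main obstacle is twofold and constitutes the core of the S\'enizergues-style argument that the first-order-term framework is meant to clarify. On the one hand, decidability of ``$\calB$ is closed'' must be reduced to a finite syntactic check: for any $E',F'$ arising as successors of a pair in $\calB$, the candidate substitutions $\sigma$ that witness $(E',F')=(E\sigma,F\sigma)$ for some $(E,F)\in\calB$ must be searchable within a presentation size bounded in terms of $\calG$ and $\calB$. This needs a careful normal-form lemma about how root-rewriting interacts with variable occurrences, together with the uniqueness of $\calL^{\ltsrul}_\calG$-paths. On the other hand, completeness: whenever $T_0\sim U_0$ genuinely holds, \emph{some} finite such $\calB$ must exist. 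I would obtain this via a pumping/saturation argument along a ``balanced'' strategy in the bisimulation game: pairs reached at positions of sufficiently high eq-level relative to the height of their distinguishing subterms can be factored through a common substitution into a bounded supply of ``primitive'' pairs whose presentation size is bounded in $|\calG|$, and only these primitive pairs need to go into $\calB$. Formalising ``balanced'', proving the decomposition lemma, and producing the explicit size bound on primitive pairs is where the hard work will lie.
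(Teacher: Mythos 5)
Your negative branch is fine and coincides with the paper's Prop.~\ref{prop:negatcase} (stagewise computation of $\sim_{k+1}$ over the finitely many terms reachable in $k{+}1$ steps). The genuine gap is in the positive branch. You propose to witness $T_0\sim U_0$ by a finite, transition-closed ``bisimulation base'' $\calB$ whose congruence closure contains $(T_0,U_0)$, with completeness obtained by a pumping/saturation argument yielding ``primitive'' pairs of presentation size bounded in terms of $\calG$. This is the classical route for BPA (arity-one nonterminals), and the paper discusses exactly this idea in a remark only to set it aside: for general FO-grammars ``such a straightforward approach does not seem clear.'' The obstacle is not mere formalisation. With arities greater than one, an equivalent pair need not decompose into smaller equivalent pairs over a common substitution, and no bound on the size of the required ``primitive'' pairs in terms of $\calG$ alone is known; your sentence asserting such a bound is precisely the unproven core, and it claims more than the paper establishes. (Decidability of closedness of a candidate base, via matching of regular terms against the congruence closure of $\calB$, is a further unaddressed issue, but it is secondary to the completeness gap.)

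The paper's actual finite witness for $T_0\sim U_0$ is not a base but a winning strategy for Prover in the third version of a Prover--Refuter game. Prover balances so that the pairs $(T_i,U_i)$ stay ``boundedly close'' to pivots, which forces an infinite $(n_0,g_0)$-subsequence (Lemma~\ref{lem:forcingngseq}); separately, every \emph{eqlevel-decreasing} $(n,g)$-sequence has length at most $\ell_{n,g}$ (Lemma~\ref{lem:realboundng}). Crucially, computing $\ell_{n,g}$ requires knowing the maximal finite eq-levels among non-equivalent pairs up to a certain presentation size, and these are finite but not bounded a priori by the grammar. Hence Prover \emph{guesses} the set $\calC$ of non-equivalent small pairs (certifying each member by the negative semi-procedure), Refuter may challenge the remainder claimed equivalent, and soundness of the resulting bound $\ell^{\upC}_{n_0,g_0}$ is Lemma~\ref{lem:boundng}. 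So finiteness of the positive witness comes from a length bound on decreasing sequences relative to a guessed but verifiable oracle, not from a size bound on a closed base. To repair your proposal you would either have to prove the existence of a finite closed base for arbitrary arities --- an apparently open strengthening --- or replace the base by something like the game-plus-guess construction above.
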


\section{Proof of Theorem~\ref{th:bisdecid}}

We note that deciding $\sim_0$ is trivial, since 
$T\sim_0 U$ holds for all $T,U$.  
When having a procedure deciding $\sim_k$, we can 
easily construct a procedure deciding $\sim_{k+1}$;
this follows from the facts that 
$T\sim_{k+1}U$ iff $(T,U)$ is covered by
$\sim_k$, and that
for any $V$ we can
construct all (finitely many) pairs $(a,V')$ such that $V\gt{a}V'$.
We thus get a part of 
Theorem~\ref{th:bisdecid}:

\begin{proposition}\label{prop:negatcase}
There is an algorithm that, given $\calG$ and $T_0,U_0$,
outputs $\eqlevel(T_0,U_0)$ if $T_0\not\sim U_0$, and does not halt if
$T_0\sim U_0$.
\end{proposition}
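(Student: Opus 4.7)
The plan is to semi-decide inequivalence by successively computing whether $T_0\sim_k U_0$ holds for $k=1,2,3,\dots$, and halting as soon as a witnessing $k$ is found. Concretely, the algorithm runs the loop: for $k=1,2,\dots$, test $T_0\sim_k U_0$; if this fails, output $k-1$ and halt; otherwise continue. Everything rests on three observations already essentially available in the preceding text.

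First, for each fixed $k\in\Nat$, the predicate ``$T\sim_k U$'' is uniformly decidable on pairs of regular terms given by their graph presentations. I would argue this by induction on $k$. The base case is trivial since $\sim_0$ is the full relation. For the step, by definition $T\sim_{k+1} U$ iff for every $a\in\act$ and every $T'$ with $T\gt{a}T'$ there is $U'$ with $U\gt{a}U'$ and $T'\sim_k U'$, and symmetrically. Since $\calL^\ltsact_\calG$ is image-finite, and in fact, given (a presentation of) a term $V$, the finite set $\{(a,V')\mid V\gt{a}V'\}$ is effectively computable directly from the finitely many grammar rules (a rule $A(x_1,\dots,x_m)\gt{a} E$ applies precisely when $\termroot(V)=A$, producing $V'=E\sigma$ with $\sigma(x_i)$ the $i$th root-successor of $V$; the variable convention $x\gt{a_x}x$ is treated separately), the condition reduces to finitely many checks of $\sim_k$ on effectively constructed pairs, which is decidable by the induction hypothesis.

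Second, I would verify correctness of the outer loop. If $T_0\not\sim U_0$, then by the image-finiteness fact $\sim\,=\,\sim_\omega$ recalled in the preliminaries, there exists some smallest $k\ge 1$ with $T_0\not\sim_k U_0$. By the monotonicity $\sim_0\supseteq\sim_1\supseteq\cdots$ and the definition of $\eqlevel$, we have $\eqlevel(T_0,U_0)=k-1$, which is exactly what the algorithm outputs. Conversely, if $T_0\sim U_0$, then $T_0\sim_k U_0$ for every $k$, so the loop never terminates, as required.

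There is no real obstacle here; this is the ``easy half'' of Theorem~\ref{th:bisdecid}, namely the semi-decidability of inequivalence, and it follows directly from image-finiteness together with the decidability of each $\sim_k$. The genuine difficulty, to be addressed in later sections, is the complementary direction: producing a procedure that halts when $T_0\sim U_0$, for which one cannot simply enumerate bisimulations because a witnessing bisimulation may in principle be infinite, and one needs the structural theory of first-order grammars to obtain an effective finite certificate.
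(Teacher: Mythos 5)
Your proposal is correct and follows essentially the same route as the paper: decide $\sim_k$ by induction on $k$ (using that successor pairs $(a,V')$ are effectively computable and that the LTS is image-finite), then iterate $k=1,2,\dots$ until $T_0\not\sim_k U_0$, which by $\sim\,=\,\sim_\omega$ happens exactly when $T_0\not\sim U_0$. No gaps.
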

We need to modify the algorithm so that it recognizes the case 
$T_0\sim U_0$ in
finite time.
As a convenient  tool
we introduce a round-based game between
Prover(she) and Refuter(he); 
the game is more involved than the standard
bisimulation game. 
We start with a simple first version of the game, and then we
 enhance it stepwise.
Refuter will be always able to force his win in finite time 
if the terms in the initial pair $(T_0,U_0)$ are non-equivalent.
Prover will be always able to avoid losing if $T_0\sim U_0$,
but only in the
last game-version she will be able to force her win in finite
time. Since Prover's winning strategy for $T_0\sim U_0$
in the last game-version
will be finitely presentable and
effectively verifiable, a proof of  Theorem~\ref{th:bisdecid} will be
finished.
Before the first game-version we 
observe some simple standard facts related to
(stratified) bisimulation equivalence.

\textbf{Expansions.}
Assume an LTS $\calL=(\calS,\act,(\gt{a})_{a\in\act})$.
By
\begin{center}
$\calB\iscov\calB'$, where $\calB,\calB'\subseteq \calS\times\calS$,
\end{center}
we denote that $\calB'$ is a \emph{minimal expansion for $\calB$},
i.e., $\calB'$ 
covers 
$\calB$ and no proper subset of $\calB'$ covers $\calB$;
this also implies
that for each $(s',t')\in\calB'$ there is $(s,t)\in\calB$
such that $s\gt{a}s'$ and $t\gt{a}t'$ for some $a\in\act$.
We note that $\emptyset\iscov\emptyset$, and
if $s,t$ are dead (not enabling any action), then
$\{(s,t)\}\iscov\emptyset$.

For any $k\in\Nat$ we have
$k<\omega$ and we stipulate  $\omega-k=\omega+k=\omega$.
We also 
stipulate $\min\emptyset =\omega$, and define
$\leasteqlev(\calB)=\min\{\eqlevel(s,t)\mid (s,t)\in\calB\}$.

\begin{proposition}\label{prop:simplecovering}\hfill\\
(1) If  $\leasteqlev(\calB)=0$ then there is no $\calB'$ 
such that  $\calB\iscov\calB'$.
\\
(2) If $\calB\iscov\calB'$ and  $\leasteqlev(\calB)<\omega$, then 
$\leasteqlev(\calB)>\leasteqlev(\calB')$.
\\
(3) If $\leasteqlev(\calB)>0$ then there is $\calB'$ such that 
$\calB\iscov\calB'$ and $\leasteqlev(\calB')\geq
\leasteqlev(\calB)-1$. 
(In particular, if $\calB\subseteq\,\sim$ then $\calB\iscov\calB'$ for
some $\calB'\subseteq\,\sim$.)
\\
(4) For $k\in\Nat$ we have $s\sim_k t$ iff there is a sequence
$\{(s,t)\}\iscov \calB_1\iscov\calB_2\iscov\cdots\iscov \calB_k$.
\end{proposition}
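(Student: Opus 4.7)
The plan is to treat the four parts in the order stated, relying throughout on two elementary facts: that $s\sim_{k+1}t$ holds iff $(s,t)$ is covered by $\sim_k$, and that covering is monotone in the covering set (if $\calB_1\subseteq\calB_2$ and $\calB_1$ covers a pair, so does $\calB_2$).

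For (1), if $\leasteqlev(\calB)=0$ then some $(s,t)\in\calB$ has $\eqlevel(s,t)=0$, i.e.\ $s\not\sim_1 t$; by definition such a pair is covered by no subset of $\calS\times\calS$, so nothing covers $\calB$ and in particular no minimal expansion exists. For (2), set $k=\leasteqlev(\calB)$; by (1) we have $k\geq 1$, and some $(s,t)\in\calB$ satisfies $\eqlevel(s,t)=k$, so $s\not\sim_{k+1}t$. Thus $(s,t)$ is not covered by $\sim_k$, meaning there is a transition, say $s\gt{a}s'$ (the symmetric case is analogous), such that no $t'$ reached by $t\gt{a}t'$ lies in $\sim_k$ with $s'$. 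Since $\calB'$ does cover $(s,t)$, for this particular $s'$ some such $t'$ satisfies $(s',t')\in\calB'$; this witness has $\eqlevel(s',t')<k$, so $\leasteqlev(\calB')<k$.

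For (3), let $k=\leasteqlev(\calB)\geq 1$. Every $(s,t)\in\calB$ satisfies $s\sim_k t$ when $k<\omega$ and $s\sim t$ when $k=\omega$, so $(s,t)$ is covered by $\sim_{k-1}$, respectively by $\sim$. Accordingly, for each transition leaving $s$ or $t$ I select a matching response whose target pair has eq-level at least $k-1$ (using the convention $\omega-1=\omega$); gathering all these choices yields a set $\calB''$ that covers $\calB$ with $\leasteqlev(\calB'')\geq k-1$. Any minimal $\calB'\subseteq\calB''$ that still covers $\calB$ then satisfies $\calB\iscov\calB'$, and since $\calB'\subseteq\calB''$ we have $\leasteqlev(\calB')\geq\leasteqlev(\calB'')\geq k-1$. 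The parenthetical about $\calB\subseteq\,\sim$ is the case $k=\omega$.

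For (4), I would argue by induction on $k$. The forward direction starts from $\calB_0=\{(s,t)\}$ with $\leasteqlev(\calB_0)\geq k$ and iterates (3): at each step $j<k$ we have $\leasteqlev(\calB_j)\geq k-j\geq 1$, so (3) supplies $\calB_j\iscov\calB_{j+1}$ with $\leasteqlev(\calB_{j+1})\geq k-j-1$. The converse is cleanest as a slightly stronger statement: whenever $\calB_0\iscov\calB_1\iscov\cdots\iscov\calB_k$, we have $\calB_0\subseteq\,\sim_k$. This goes by induction on $k$: the case $k=0$ is trivial; for $k\geq 1$ the inductive hypothesis gives $\calB_1\subseteq\,\sim_{k-1}$, and since $\calB_1$ covers $\calB_0$, monotonicity of covering implies that $\sim_{k-1}$ also covers $\calB_0$, i.e.\ $\calB_0\subseteq\,\sim_k$. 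None of the four parts is deep; the only step needing a moment of care is (2), where the low-eq-level witness in $\calB'$ must be localised through the particular pair of minimal eq-level whose $\sim_k$-covering fails.
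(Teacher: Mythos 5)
Your proof is correct. The paper states Proposition~\ref{prop:simplecovering} as a collection of standard facts and gives no proof, and your argument is exactly the expected one: monotonicity of covering for (1) and the backward direction of (4), localising the failure of $\sim_k$-covering at a minimal-eq-level pair for (2), choosing responses witnessing $\sim_{k-1}$ (or $\sim$) and passing to a minimal covering subset for (3), and iterating (3) for the forward direction of (4).
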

\textbf{Prover-Refuter game (first version).}
A play starts with a grammar
$\calG=(\calN,\act,\calR)$ and an \emph{initial pair}
$(T_0,U_0)$ of terms.
For $i=0,1,2\dots$, the \emph{$(i{+}1)$-th round} of the play
starts with some specified pair $(T_i,U_i)$ and proceeds
as follows:

\begin{enumerate}
\item
Prover chooses $k>0$ and some 
$\calB_j\subseteq\trees_\calN\times\trees_\calN$ for $j=1,2\dots,k$
and shows that 
$\calB_0\iscov\calB_1\iscov\calB_2\iscov\ldots\iscov\calB_k$  
where $\calB_0=\{(T_i,U_i)\}$.
\\
If this is impossible 
(i.e., if $T_i\not\sim_1 U_i$), then Refuter
wins.
\item
Refuter chooses a pair $(T'_i,U'_i)$ in 
$\calB_k\smallsetminus \bigcup_{j=0}^{k-1}\calB_j$. 
If this is impossible, i.e. 
if $\calB_k\subseteq  \bigcup_{j=0}^{k-1}\calB_j$ (which includes the
case $\calB_k=\emptyset$),
then  Prover wins.
(In this case  $T_i\sim U_i$, 
since  $T_i\not\sim U_i$ implies that 
$\calB_k\not\subseteq \bigcup_{j=0}^{k-1}\calB_j$,
by Prop.~\ref{prop:simplecovering}(2).)
\item
The pair $(T_{i+1},U_{i+1})=(T'_i,U'_i)$ is taken for starting 
the $(i{+}2)$-th round.
\end{enumerate}
\begin{figure}%[!t]
\centering
\includegraphics[scale=0.52]{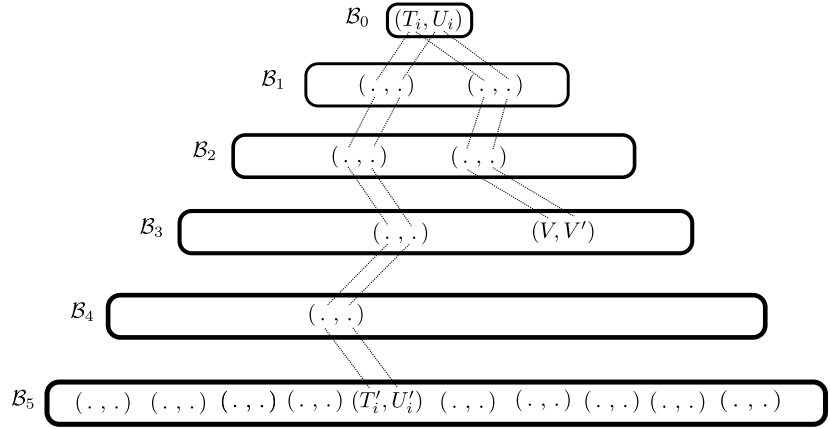}
\caption{Illustration of a game-round (where Prover has chosen $k=5$)}
\label{fig:roundsketch}
\end{figure}
Fig.~\ref{fig:roundsketch} illustrates an $(i{+}1)$-th round (with
$k{=}5$).
Note that 
our requirement $\calB_0\iscov\calB_1\iscov\cdots\iscov\calB_k$
entails that there are $u_1,u_2\in\calR^*$ such that $|u_1|=|u_2|=k$, $\lab(u_1)=\lab(u_2)$, and $T_i\gt{u_1}T'_i$, 
$U_i\gt{u_2}U'_i$ (in $\calL^{\ltsrul}_\calG$), as sketched in the
figure.
\begin{quote}
	{\small	
The reversals of $u_1,u_2$ can be found by the ``bottom-up
approach'', starting from
 $(T'_i,U'_i)$ and going up to $(T_i,U_i)$.
This follows from recalling that $\calB\iscov \calB'$ implies that 
for each $(G',H')\in\calB'$ there is $(G,H)\in\calB$ such that 
$G\gt{a}G'$ and $H\gt{a}H'$ for some $a\in\act$.
}
\end{quote}
We also note that, e.g., 
for each $V$ such that $T_i\gt{v}V$ in $\calL^{\ltsrul}_\calG$
where $|v|=j\leq k$ there is $v'\in\calR^*$ and $V'$ such that 
$|v'|=|v|$, $\lab(v)=\lab(v')$, $U_i\gt{v'}V'$,
and $(V,V')\in\calB_j$.
\begin{quote}
	{\small	
This is also depicted in Fig.~\ref{fig:roundsketch}. When looking for
$U_i\gt{v'}V'$, we now use the
``top-down approach'' driven by $T_i\gt{v}V$.
}
\end{quote}
We say that \emph{Refuter} uses the \emph{least-eqlevel strategy}, if
he always chooses $(T'_i,U'_i)$ so that 
$\eqlevel(T'_i,U'_i)=\leasteqlev(\calB_k)$; in this case 
$\eqlevel(T'_i,U'_i)<
\leasteqlev(\bigcup_{j=0}^{k-1}\calB_j)$,
and thus $\eqlevel(T'_i,U'_i)< \eqlevel(T_i,U_i)-(k{-}1)$,
unless $T\sim U$ for all 
$(T,U)\in \bigcup_{j=0}^{k}\calB_j$.
We easily observe the following facts.

\begin{proposition}\label{prop:firstgameanal}
	Let $\eqlevel(T_0,U_0)=e\in\Nat\cup\{\omega\}$.
\\
1. If $e<\omega$, then Refuter  wins within $e{+}1$ rounds by 
the least-eqlevel strategy. 
\\
2. Prover can guarantee that she will not lose within $e$ rounds.
\end{proposition}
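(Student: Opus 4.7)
The plan is to handle both parts by a direct induction on the round index, tracking how $e_i = \eqlevel(T_i,U_i)$ evolves, with all work outsourced to Proposition~\ref{prop:simplecovering}. The conventions $\omega-1=\omega$ and $\leasteqlev(\emptyset)=\omega$ let me treat the case $e=\omega$ uniformly alongside the finite case.

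For part~1, I will show that when $e_i<\omega$ and Refuter follows the least-eqlevel strategy, the round forces $e_{i+1}\le e_i-1$. Suppose Prover opens round $i{+}1$ with a chain $\{(T_i,U_i)\}=\calB_0\iscov\calB_1\iscov\cdots\iscov\calB_k$. Since $\leasteqlev(\calB_0)=e_i<\omega$, iterating Proposition~\ref{prop:simplecovering}(2) along the chain produces a strictly decreasing, still finite, sequence $e_i>\leasteqlev(\calB_1)>\cdots>\leasteqlev(\calB_k)$; in particular $\leasteqlev(\calB_k)\le e_i-k$. Refuter picks some $(T'_i,U'_i)\in\calB_k$ with $\eqlevel(T'_i,U'_i)=\leasteqlev(\calB_k)$. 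This pair cannot already lie in any $\calB_j$ with $j<k$, for then $\leasteqlev(\calB_j)\le\eqlevel(T'_i,U'_i)=\leasteqlev(\calB_k)$ would contradict the strict decrease; hence the choice is a legal move. Thus $e_{i+1}\le e_i-k\le e_i-1$, and by induction $e_i\le e-i$, so $e_e=0$. By Proposition~\ref{prop:simplecovering}(1), Prover is then unable to initiate round $e{+}1$, so Refuter has won within $e{+}1$ rounds.

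For part~2, I will have Prover always play $k=1$: in round $i{+}1$, provided $e_i\ge 1$, she invokes Proposition~\ref{prop:simplecovering}(3) to choose $\calB_1$ with $\{(T_i,U_i)\}\iscov\calB_1$ and $\leasteqlev(\calB_1)\ge e_i-1$. If Refuter has no valid move she wins outright; otherwise any choice $(T_{i+1},U_{i+1})$ he makes satisfies $e_{i+1}\ge\leasteqlev(\calB_1)\ge e_i-1$. Induction then gives $e_i\ge e-i$ (with $\omega-i=\omega$). As long as $i\le e-1$, this guarantees $e_i\ge 1$, so Prover can again apply Proposition~\ref{prop:simplecovering}(3) to play round $i{+}1$. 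Consequently Prover survives rounds $1,\dots,e$, and in the case $e=\omega$ she plays forever.

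The only mildly delicate step, and the one I regard as the main technical point, is the verification in part~1 that Refuter's least-eqlevel pick actually lies outside $\bigcup_{j=0}^{k-1}\calB_j$ so as to be a legal move; this is precisely where the strictness in Proposition~\ref{prop:simplecovering}(2) is indispensable. Everything else is routine bookkeeping on $e_i$.
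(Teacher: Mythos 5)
Your proof is correct and follows essentially the same route the paper intends: the paper leaves Proposition~\ref{prop:firstgameanal} as an easy observation, having just noted that the least-eqlevel strategy forces $\eqlevel(T'_i,U'_i)<\eqlevel(T_i,U_i)-(k{-}1)$ (your part~1, including the legality of Refuter's pick via the strictness in Proposition~\ref{prop:simplecovering}(2)), while part~2 is the evident iteration of Proposition~\ref{prop:simplecovering}(3). Nothing is missing.
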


\smallskip

\textbf{Prover's additional tool.}
A challenge is to add sound possibilities for Prover to enable her to
force her win in finite time if $T_0\sim U_0$.
We allow
Prover to claim a win 
when she can (soundly) demonstrate, in some $(i{+}1)$-th round, 
that either Refuter has not used the least-eqlevel strategy 
or $T_0\sim U_0$.
This new abstract rule does not
change Prop.~\ref{prop:firstgameanal}.
A simple instance 
is a \emph{repeat}: if 
$\{T_{i},U_{i}\}= \{T_{j},U_{j}\}$ for some $j<i$,
then Prover can claim her win. 
(Equality $T_{i}=U_{i}$ is another trivial example.)

We thus further assume that Prover wins when a repeat appears,
and we look at more involved options
enabling her to ``balance'', i.e., to replace
$T'_i,U'_i$ (in the point $3$ of a game-round)
with  $T_{i+1},U_{i+1}$ that are 
``closer'' to each other, while 
keeping  $\eqlevel(T_{i+1},U_{i+1})=\eqlevel(T'_i,U'_i)$
when Refuter
uses the least eq-level strategy.
Before formulating the second version of the game,
we clarify the crucial underlying facts. First a trivial one:

\begin{proposition}\label{prop:basicreplace}
Assume an LTS $\calL$.
	If $\eqlevel(s,t){=}k$ and $\eqlevel(s,s')> k$,
then $\eqlevel(s',t)=k$
(since $s'\sim_k s\sim_k t$ and $s'\sim_{k+1} s\not\sim_{k+1}t$). 
\end{proposition}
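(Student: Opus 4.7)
The plan is to exploit directly the facts, stated earlier, that each $\sim_k$ (and $\sim_{k+1}$) is an equivalence relation, together with the monotonicity $\sim_{k+1}\,\subseteq\,\sim_k$. The hint in parentheses after the statement already encapsulates the entire argument; my job is just to unpack it cleanly.

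First, I would unfold the definition of $\eqlevel$. From $\eqlevel(s,t)=k$ we read off two things: $s\sim_k t$ and $s\not\sim_{k+1}t$ (using the fact that $k$ is the \emph{maximum} $j$ with $s\sim_j t$, which is finite here because if it were $\omega$ we would have $s\sim_{k+1}t$). From $\eqlevel(s,s')>k$ we get $s\sim_{k+1}s'$, hence also $s\sim_k s'$ by $\sim_{k+1}\,\subseteq\,\sim_k$.

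Next, I would derive the two halves of $\eqlevel(s',t)=k$ separately. For the lower bound $\eqlevel(s',t)\geq k$: since $\sim_k$ is symmetric and transitive, $s'\sim_k s$ and $s\sim_k t$ give $s'\sim_k t$. For the upper bound $\eqlevel(s',t)\leq k$, i.e.\ $s'\not\sim_{k+1}t$: I would argue by contradiction. If $s'\sim_{k+1}t$, then combining with $s\sim_{k+1}s'$ (and symmetry/transitivity of $\sim_{k+1}$) yields $s\sim_{k+1}t$, contradicting $\eqlevel(s,t)=k$. Hence $\eqlevel(s',t)=k$, as required.

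There is no real obstacle; the only minor point to be careful about is the convention with $\omega$, but since the hypothesis $\eqlevel(s,t)=k$ implicitly means $k\in\Nat$ (otherwise ``$>k$'' for an ordinal and the subsequent arithmetic would require the stipulation $\omega{-}k=\omega$ stated earlier), and since $\sim_k$ is an equivalence for every $k\in\Nat$, the argument applies verbatim. This is why the proposition is labelled trivial and the paper packs the whole proof into a parenthetical remark.
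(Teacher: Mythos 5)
Your proof is correct and is exactly the argument the paper compresses into its parenthetical remark: transitivity of $\sim_k$ gives $s'\sim_k t$, and transitivity of $\sim_{k+1}$ together with $s\not\sim_{k+1}t$ rules out $s'\sim_{k+1}t$. Nothing further is needed.
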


\textbf{Congruence, the crux of balancing.}
We assume a given grammar $\calG=(\calN,\act,\calR)$.
For substitutions
$\sigma,\sigma'$ and $k\in\Nat\cup\{\omega\}$ we put
\begin{center}
$\sigma\sim_k \sigma'$ 
if 
$x_i\sigma\sim_k x_i\sigma'$ for
each $x_i\in\var$.
\end{center}
We also put
$\eqlevel(\sigma,\sigma')=\max\,\{\,k\in\Nat\cup\{\omega\}\mid 
\sigma\sim_k\sigma'\}$; hence
$\eqlevel(\sigma,\sigma')=
\min \,\{\,\eqlevel(x_i\sigma, x_i\sigma')\mid x_i\in\var\}$.
We now note that $\sim_k$ and $\sim=\sim_\omega$ are congruences:

\begin{proposition}\label{prop:congruence}\hfill\\
(1) If $E\sim_k F$, then $E\sigma\sim_k F\sigma$;
hence $\eqlevel(E,F)\leq \eqlevel(E\sigma,F\sigma)$.
\\
(2) If $\sigma\sim_k \sigma'$, 
then $E\sigma\sim_k E\sigma'$; 
hence $\eqlevel(\sigma,\sigma')\leq \eqlevel(E\sigma,E\sigma')$.
\\
Moreover, if $\sigma\sim_k \sigma'$ and
$E\not\in\var$ (i.e., $\termroot(E)\in\calN$),
then $E\sigma\sim_{k+1} E\sigma'$.
\end{proposition}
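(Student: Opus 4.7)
The plan is to prove both parts by induction on $k\in\Nat$, and then deduce the case $k=\omega$ from $\sim_\omega=\bigcap_{k\in\Nat}\sim_k$. The single structural fact that drives everything is that (non-special) transitions of $\calL^\ltsact_\calG$ commute with substitutions. Concretely, for a function-rooted term $E=A(E_1,\dots,E_m)$ and a rule $A(x_1,\dots,x_m)\gt{a}E'$, the induced transition is $E\gt{a}E^*$ with $E^*=E'\rho$ and $x_j\rho=E_j$; applying the same rule to $E\sigma$ gives $E\sigma\gt{a}E'\tau$ with $x_j\tau=E_j\sigma$, which by associativity of substitution composition equals $E'\rho\sigma=E^*\sigma$. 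Since the applicable rules depend only on $\termroot(E\sigma)=A=\termroot(E)$, this yields a natural bijection $E\gt{a}E^*\ \longleftrightarrow\ E\sigma\gt{a}E^*\sigma$ between the $a$-transitions of $E$ and those of $E\sigma$. This is the only tool I need.

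For part~(1), the base $k=0$ is trivial since $\sim_0$ is universal. For the step, assume $E\sim_{k+1}F$. If $E\in\var$, the stipulation $x_i\not\sim_1 H$ (for $H\neq x_i$) forces $F=E$, so $E\sigma=F\sigma$ holds immediately. Otherwise $E$ and $F$ are both function-rooted; any transition $E\sigma\gt{a}G$ has the form $G=E^*\sigma$ for some $E\gt{a}E^*$, and $E\sim_{k+1}F$ supplies a matching $F\gt{a}F^*$ with $E^*\sim_k F^*$. The induction hypothesis then yields $E^*\sigma\sim_k F^*\sigma$, while $F\sigma\gt{a}F^*\sigma$. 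Symmetry completes the cover, and the eq-level bound follows at once from the stratified characterization.

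For part~(2), induct on $k$ again with the base trivial. Assume $\sigma\sim_{k+1}\sigma'$. If $E=x_i$, then $E\sigma=x_i\sigma\sim_{k+1}x_i\sigma'=E\sigma'$ directly from the hypothesis. If $E$ is function-rooted, then every transition $E\sigma\gt{a}E^*\sigma$ is matched by $E\sigma'\gt{a}E^*\sigma'$, and the induction hypothesis applied to the term $E^*$ with $\sigma\sim_k\sigma'$ gives $E^*\sigma\sim_k E^*\sigma'$, covering both directions. The ``moreover'' assertion then follows directly from part~(2) itself: when $E\not\in\var$, the same matching $E\sigma\gt{a}E^*\sigma\ \leftrightarrow\ E\sigma'\gt{a}E^*\sigma'$ works, and part~(2) applied to each successor $E^*$ at level $k$ yields $E^*\sigma\sim_k E^*\sigma'$, which upgrades to $E\sigma\sim_{k+1}E\sigma'$ because the root-level matching is absorbed by the function symbol.

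Finally, the case $k=\omega$ follows in both parts from $\sim_\omega=\bigcap_{k\in\Nat}\sim_k$: $E\sim_\omega F$ gives $E\sim_k F$ for every $k$, hence $E\sigma\sim_k F\sigma$ for every $k$ by part~(1), hence $E\sigma\sim_\omega F\sigma$, and analogously for part~(2). I do not anticipate a serious obstacle; the only mildly delicate points are the variable case in part~(1), dispatched by the $x_i\not\sim_1 H$ convention, and the one-level gain in the moreover claim, which is forced precisely because a function-rooted $E$ admits no transitions other than those commuting with substitution, so the substitution-dependent part of the successors is already located one level below the root.
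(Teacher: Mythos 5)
Your proof is correct and rests on the same key facts as the paper's: root-rewriting transitions commute with substitution (so the transitions of $E\sigma$ for function-rooted $E$ are exactly the $E^*\sigma$ with $E\gt{a}E^*$), and the stipulation $x_i\not\sim_1 H$ for $H\neq x_i$ forces variables to be matched only by themselves. You package this as an induction on $k$ via the definition of $\sim_{k+1}$, while the paper transforms the witnessing expansion sequences $\calB_0\iscov\cdots\iscov\calB_k$ directly; by Proposition~\ref{prop:simplecovering}(4) these are interchangeable, and your treatment of the variable cases and of the one-level gain in the ``moreover'' clause coincides with the paper's.
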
	

\begin{proof}
(1) Suppose $\{(E,F)\}=\calB_0\iscov{\calB_1}\iscov{\calB_2}\cdots
\iscov{\calB_k}$; note that any pair $(x_i,H)$, or $(H,x_i)$,
in $\bigcup_{j=0}^{k-1}\calB_j$
must satisfy $H=x_i$ (since otherwise it cannot be covered by our
definition of $\calL^\ltsact_\calG$).
For each $j\in[0,k]$ we put $\calB'_j=\{(G\sigma,H\sigma)\mid
(G,H)\in\calB_j\}$. We almost get
$\{(E\sigma,F\sigma)\}=\calB'_0\iscov{\calB'_1}\iscov{\calB'_2}\iscov
\cdots
\iscov{\calB'_k}$; just for the cases 
$(x_i,x_i)\in\calB_j$, $j\in[0,k{-}1]$, where 
 $(x_i\sigma, x_i\sigma)$ is not
 covered by $\calB'_{j+1}$, we complete  
 $\calB'_{j+1}, \calB'_{j+2}, \dots, \calB'_k$
 with some pairs of identical
 terms, using the fact that 
 $\{(E,E)\}\iscov \{(E',E')\mid E\gt{a}E'$ for some $a\in\act\}$).

(2) Suppose $\sigma\sim_k \sigma'$, and take 
 $\{(E,E)\}=\calB_0\iscov{\calB_1}\iscov{\calB_2}\cdots
 \iscov{\calB_k}$ where $\bigcup_{j=0}^{k}\calB_j\subseteq\,
 \{(F,F)\mid F\in\trees_\calN\}$. 
Let $\calB'_j=\{(G\sigma,G\sigma')\mid (G,G)\in\calB_j\}$.
For the cases $(x_i,x_i)\in\calB_j$, and thus 
$(x_i\sigma,x_i\sigma')\in\calB'_j$, we complete 
 $\calB'_{j+1}, \calB'_{j+2}, \dots\calB'_k$ accordingly, using the fact that 
$x_i\sigma\sim_k x_i\sigma'$. If $E\not\in\var$, then this procedure
is valid even when we 
start with
$\{(E,E)\}=\calB_0\iscov{\calB_1}\iscov{\calB_2}\cdots
\iscov{\calB_{k+1}}$.
\qed
\end{proof}

\begin{quote}
{\small
\emph{Remark.}	
The compositionality induced by the congruence properties leads
naturally to considering the following modification of our game, based
on decompositions.
Prover is always allowed to 
``decompose'' $(T_i,U_i)$, 
i.e., to present
some finite set
$\calB$
of pairs of terms that are in some sense smaller
than $(T_i,U_i)$, if it is guaranteed that
$\eqlevel(T_i,U_i)\geq \leasteqlev(\calB)$; 
Refuter then chooses a pair from $\calB$ for continuing.
If our measure of size satisfies 
that there are only finitely many pairs with the size 
that is smaller than or equal to the size of any given $(T,U)$, then 
Refuter's least-eqlevel strategy still wins if the initial terms are
non-equivalent. On the other hand, if there is a bound such that each
pair $T\sim U$ that is bigger than the bound is decomposable via a set
$\calB\subseteq\sim$, then we have a required algorithm: 
if Prover keeps decomposing large pairs via subsets of $\sim$, then we
get a repeat eventually. 

This is a basis of decision algorithms for so called BPA processes,
which can be viewed as being generated by FO-grammars where all
nonterminals have arity $1$ (or $0$).
We can refer, e.g., to the
papers~\cite{DBLP:journals/iandc/ChristensenHS95,DBLP:conf/mfcs/BurkartCS95,Jan12b}
for details. 
However, in our more general case such a straightforward approach
does not seem clear. We follow a more involved way, based on a balancing
strategy that makes the component-terms in $(T_i,U_i)$ 
``close to each other''. 
We note that 
balancing strategies, in different frameworks, 
were used by 
S\'enizergues~\cite{Senizergues:TCS2001,Senizergues:TCS2002simple,Seni05}
and then Stirling~\cite{Stirling:TCS2001,Stir:DPDA:prim,stirling-pda-00}.
In fact, we will also discuss 
a bit of decomposition later, in Section~\ref{sec:addrem}.
}
\end{quote}
We now illustrate how Prover can use already the simple fact captured by
Prop.~\ref{prop:basicreplace}.
Suppose the $(i{+}1)$-th round 
starts with $(T_i, U_i)$ and Refuter chooses 
$(T'_i,U'_i)$ in $\calB_k$ (we refer to the notation in the game
definition, and to Fig.~\ref{fig:roundsketch}).
We thus have $T_i\gt{u_1}T'_i$, $U_i\gt{u_2}U'_i$ in $\calL^{\ltsrul}_\calG$,
for some $u_1,
u_2\in\calR^*$, where  $|u_1| = |u_2|= k$ (and $\lab(u_1)=\lab(u_2)$).

Suppose that $T_i\gt{u_1}T'_i$ is \emph{not a shortest path} from 
$T_i$ to $T'_i$ (in $\calL^{\ltsrul}_\calG$);
then we have $T_i\gt{v_1}T'_i$ for some $v_1\in\calR^*$ 
where $|v_1|<|u_1|$. As we already observed,  we then must also have 
$U_i\gt{v_2}U''$ for some $U''$ and some $v_2\in\calR^*$ such that 
$|v_1|=|v_2|$ 
and $(T'_i,U'')\in\bigcup_{j=0}^{k-1}\calB_j$.
(In Fig.~\ref{fig:roundsketch} we would have a respective pair
$(V,V')=(T'_i,U'')$.)
We thus have  $\eqlevel(T'_i,U'')>\eqlevel(T'_i,U'_i)$ when $T'_i\not\sim
U'_i$ and Refuter uses the least-eqlevel strategy.

Therefore Refuter ``cannot protest'' when Prover puts
$(T_{i+1},U_{i+1})=(U'',U'_i)$ instead of 
$(T_{i+1},U_{i+1})=(T'_i,U'_i)$, since
$\eqlevel(T'_i,U'_i)=\eqlevel(U'',U'_i)$ if Refuter uses the
least-eqlevel strategy.
We note that $U'',U'_i$ are close to each other in the
sense that they are both reachable within $k$ steps from one
``pivot term'', namely $U_i$. 
We have $U_i\close_{k} (U'',U'_i)$, where generally we define 
\begin{equation}\label{eq:relclose}
\begin{minipage}{0.8\textwidth}
$W\close_{k}T \iffdef W\gt{v}T$ for some  
$v \text{ of length at most } k$, and
\\
$W\close_{k}(T,U)\iffdef W\close_{k}T$ and $W\close_{k}U$.
\end{minipage}
\end{equation}
In the second game-version below
we use the congruence properties
to enable
Prover to replace $(T'_i,U'_i)$ with ``closer''
$(T_{i+1},U_{i+1})$ even in some cases where $T_i\gt{u_1}T'_i$ is
a shortest path from $T_i$ to $T'_i$ and 
$U_i\gt{u_2}U'_i$ is a shortest path from $U_i$ to $U'_i$.

\smallskip

\textbf{Prover-Refuter game (second version).}
The only change w.r.t. the first game-version is in the
 point $3$:
\begin{enumerate}[3.]
\item
	Prover creates $(T_{i+1},U_{i+1})$ for the start of the 
	$(i{+}2)$-th round:

	Either she puts $(T_{i+1},U_{i+1})=(T'_i,U'_i)$,
	thus making \emph{no change}, 
	or she can use one of the following options if available:
	\begin{enumerate}[i/]
		\item \emph{Left-balancing}:
Prover presents $T'_i$ as $G\sigma$ for some \emph{finite
term} $G$ and some substitution $\sigma$,
where for each
$V\in\range(\sigma)$ she finds $V'$ such that 
$(V, V')\in \bigcup_{j=0}^{k-1}\calB_j$.
She defines $\sigma'$ with 
$\support(\sigma')=\support(\sigma)$
as follows: if $\sigma(x_\ell)=V$, then
$\sigma'(x_\ell)=V'$, where $(V,V')$ is an above found pair.
Finally she  puts 
$(T_{i+1},U_{i+1})=(G\sigma',U'_i)$.
\item
\emph{Right-balancing}:	
Symmetrically, Prover presents $U'_i$ as $G\sigma$, 
finds all appropriate pairs
$(V',V)$ in $\bigcup_{j=0}^{k-1}\calB_j$, 
and puts $(T_{i+1},U_{i+1})=(T'_i,G\sigma')$. 
\end{enumerate}
\end{enumerate}
Our previous illustration, where $T_i\gt{u_i}T'_i$ 
was not a shortest path from $T_i$ to $T'_i$,
was a special case:
we had $T'_i=G\sigma$ where $G=x_1$, $\support(\sigma)=\{x_1\}$ and 
$x_1\sigma=T'_i$, and we replaced $G\sigma$ with 
$G\sigma'$ where $x_1\sigma'=U''$ (and thus 
$(x_i\sigma,x_i\sigma')\in\bigcup_{j=0}^{k-1}\calB_j$
for all $x_i\in\support(\sigma)=\support(\sigma')=\{x_1\}$).

Informally speaking, in the second game-version
Prover might replace the whole $T'_i$ with some $U''$ that
is ``shortly reachable from the pivot'' (if possible),
but she can also replace just
``small-depth'' subterms $V$ of $T'_i$ with (sub)terms $V'$ that are
``shortly reachable from the pivot''; in the latter case some
``\emph{special finite head}'' $G$ of $T'_i$ remains. (The case of
right-balancings is symmetric.)

A left-balancing (with the pivot $W=U_i$ and the bal-result 
$(G\sigma',U'_i)$)
is also depicted in the upper part of
Fig.~\ref{fig:tworounds}. (The lower part will be discussed later.)

\begin{figure}%[!t]
\centering
\includegraphics[scale=0.52]{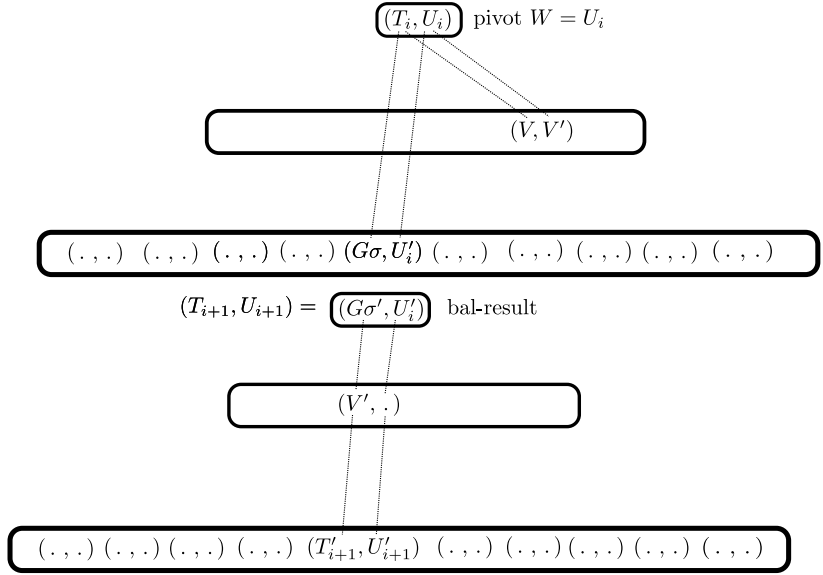}
\caption{Two consecutive rounds, with a left-balancing in the first
one}
\label{fig:tworounds}
\end{figure}

\smallskip
We can easily verify that Prop.~\ref{prop:firstgameanal}
holds also for the second game-version. The crucial point is that
$\eqlevel(T_{i+1},U_{i+1})=\eqlevel(T'_i,U'_i)$ when Refuter uses the
least-eqlevel strategy 
(this is based on Prop.~\ref{prop:congruence}(2) and 
Prop.~\ref{prop:basicreplace}).

\textbf{Bal-results are close to pivots.}
When doing a left-balancing, 
replacing 
$(T'_i,U'_i)=(G\sigma,U'_i)$ with 
the \emph{bal-result}
$(T_{i+1},U_{i+1})=(G\sigma',U'_i)$,
we might not have $U_i\close_{k} (T_{i+1},U_{i+1})$
for the \emph{pivot} $U_i$,
 but we surely 
have
$U_i\closelh{d}_{k} (T_{i+1},U_{i+1})$, for $d=\depth(G)$,
where we generally extend the notation from~(\ref{eq:relclose})
as follows: 
\begin{equation}\label{eq:closelh}
\begin{minipage}{0.8\textwidth}
$	
W\closelh{d}_{k} (T,U) \iffdef 
\text{ there is 
a finite term }
	G \text { and a substitution } 
	\sigma$ 
\\	
such that 
$T=G\sigma, \depth(G)\leq d$,  $W\close_{k}U$,
 and 
\\
$W\close_{k}V$
for all $V\in\range(\sigma)$;
\end{minipage}
\end{equation}
the symbol \textsc{L} signals that we allow a special head
in the \emph{left}-hand component (here with the height at most $d$).
Symmetrically 
we define $W\closerh{d}_{k} (T,U)$, 
where \textsc{R} refers to the right-hand component.

\smallskip

\textbf{Two remaining steps 
in the proof of Theorem~\ref{th:bisdecid}.}
 We first give an informal sketch, which is then formalized.
 We recall that a play of the Prover-Refuter game gives
rise to a sequence 
$(T_0,U_0),(T_1,U_1), (T_2,U_2),\dots$
of pairs of terms that are the starting pairs for 
the rounds $1,2,3,\dots$, respectively.

In the first of the remaining proof steps
(captured by Lemma~\ref{lem:forcingngseq})
we show that in the case $T_0\sim U_0$ Prover can force a certain
potentially 
infinite $(n,g)$-subsequence of $(T_1,U_1), (T_2,U_2),\dots$, by a
simple balancing strategy.

In the second step
(Lemma~\ref{lem:realboundng}) we bound
the lengths of \emph{eqlevel-decreasing} $(n,g)$-sequences.
It turns out that Prover can compute a respective bound
$\ell_{n,g}\in\Nat$ on condition that she guesses correctly the pairs
of equivalent terms up to a certain (large) presentation size.

In the final game-version Prover wins if the length of a created
$(n,g)$-sequence exceeds $\ell_{n,g}$, but
Refuter's least-eqlevel strategy will be
still winning if Prover does not guess correctly when computing 
$\ell_{n,g}$.

\textbf{Eqlevel-decreasing sequences, and $(n,g)$-sequences.}
A \emph{sequence}  
$(V_1,V'_1), (V_2,V'_2),(V_3,V'_3), \dots$
of pairs of (regular) terms is \emph{eqlevel-decreasing} if 
$\omega>\eqlevel(V_1,V'_1)>\eqlevel(V_2,V'_2)>\cdots$. 
In this case the sequence must be finite, and our requirement
$V_1\not\sim V'_1$ implies that its length is
bounded by $1{+}\eqlevel(V_1,V'_1)$. 

Given a pair $(n,g)$ where $n\in\Nat$ and
$g:\Nat_+\rightarrow\Nat_+$ is a nondecreasing function
(where $\Nat_+=\{1,2,\dots\}$), 
a (finite or infinite) sequence of pairs of terms
is an \emph{$(n,g)$-sequence} if it can be presented as
\begin{center}
$(E_1\sigma,F_1\sigma), (E_2\sigma,F_2\sigma),(E_3\sigma,F_3\sigma),
\dots$
\end{center}
for a substitution $\sigma$
with $|\support(\sigma)|\leq n$, where 
$\pressize(E_j,F_j)\leq g(j)$ for $j=1,2,\dots$. 
(We put $\pressize(E,F)=\pressize(E)+\pressize(F)$, say.)
Thus the growth of the (regular) ``head-terms'' $E_j,F_j$ is bounded by
the function $g$, while at most $n$ fixed 
``tail-subterms'' (of unrestricted
size)
suffice for this presentation.

\textbf{Prover can force an $(n,g)$-subsequence by a balancing
strategy.}
We aim to prove Lemma~\ref{lem:forcingngseq}; the proof is the most
technical part of the paper, and we thus first 
explain the idea informally. 
Prover will use a simple balancing strategy, when starting with
$T_0\sim U_0$:
\begin{itemize}
	\item		
In each round Prover chooses $k=M_1$ (in the point $1$ of
the game), where $M_1$ is a sufficiently large constant computed from
the grammar $\calG$; she also uses only $\calB_j\subseteq \sim$, thus
keeping $T_i\sim U_i$ for all $i$.
\item
Prover balances (in the point $3$ of the second game-version),
e.g. by replacing $(T'_i,U'_i)=(G\sigma,U'_i)$ with  
$(T_{i+1},U_{i+1})=(G\sigma',U'_i)$ in the case of left-balancing,
\emph{only when} the respective \emph{special head $G$ 
has a bounded height}, bounded by some sufficiently large $M'_0$;
the above
$M_1$ was chosen sufficiently larger than $M'_0$.
\item
Obeying the above ``bounded-head'' condition, 
Prover balances in any round in which she has an opportunity,
but she has still another
constraint: \emph{Prover does not ``switch'' balancing sides
	in two consecutive
rounds}, i.e.,
if she does a left-balancing
in the $(i{+}1)$-th round, then she cannot do a right-balancing 
in the $(i{+}2)$-th round, and vice versa.
\end{itemize}
To sketch the idea why this strategy enables to present 
an infinite subsequence of $(T_1,U_1), (T_2,U_2),\dots$ as
an $(n,g)$-sequence, we first explore the case 
where Prover does a left-balancing in the $(i{+}1)$-th round.
Hence we have 
\begin{center}
$W=U_i\closelh{M'_0}_{M_1}
(T_{i+1},U_{i+1})=(G\sigma',U_{i+1})$
\end{center}
(using the notation in~(\ref{eq:closelh})), where $W$ is the
respective pivot and $(G\sigma',U_{i+1})$ the respective bal-result;
this is also illustrated in Fig.~\ref{fig:tworounds}.

We have two possibilities for the following $(i{+}2)$-th round:
\begin{enumerate}
	\item		
There is a left-balancing in the  $(i{+}2)$-th round.
\\
The pivot of this balancing 
is $W'=U_{i+1}$, and we have 
$W\gt{u}W'$ where
$|u|=M_1$; hence $W'$ is ``boundedly reachable'' from $W$ in this case. 

\item
Left-balancing (with a bounded head) is not possible
in the  $(i{+}2)$-th round.
\\
Here we have ``no change'', i.e., 
$(T_{i+2},U_{i+2})=(T'_{i+1}, U'_{i+1})$, and we will derive that 
$W\close_{2M_1}(T_{i+2},U_{i+2})$ 
(using the notation in~(\ref{eq:relclose})).
Now the pivot $W'$ of the first next balancing in future
will be again reachable from $W$; maybe not boundedly reachable from $W$ but
boundedly
reachable from a subterm of $W$.
\end{enumerate}
The claims in the case $2$ 
are based on the fact that the impossibility to do a
left-balancing in the  $(i{+}2)$-th round
entails that 
 the respective path 
$T_{i+1}=G\sigma'\gt{u}T'_{i+1}$ 
is a shortest path from $G\sigma'$ to $T'_{i+1}$ and
is steadily ``sinking'' (or
``popping''), exposing deeper and deeper subterms of $G\sigma'$; 
our choice of $M'_0$ and $M_1$ will guarantee that
$T_{i+1}=G\sigma'\gt{u}T'_{i+1}$ 
can be then written $G\sigma'\gt{u'}x_\ell\sigma'\gt{u''}T'_{i+1}$, 
thus ``erasing'' $G$ 
and exposing some
$x_\ell\sigma'$ that is reachable from $W$ within $M_1$ steps;
this is depicted in Fig.~\ref{fig:tworounds}, where
$x_\ell\sigma'=V'$.

A simple analysis now shows that
if there is no repeat in the sequence 
$(T_0,U_0),(T_1,U_1),(T_2,U_2),\dots $, then
there must be infinitely many balancing rounds, with the respective 
pivots denoted 
$W_1, W_2,W_3,\dots$, 
while each concrete pivot can 
repeat only boundedly many times.
In the special ``pivot path'' $W_1\gt{w_1}W_2\gt{w_2}W_3\gt{w_3}\cdots$ which we
touched on (recall 
that $W_{j+1}$ is boundedly reachable from a subterm of $W_j$)
we then must have a deepest subterm $V_0$ of $W_1$,
visited in some segment 
$W_{j_0}\gt{w_{j_0}}W_{j_0+1}$, written as 
$W_{j_0}\gt{w'_{j_0}}V_0\gt{w''_{j_0}}W_{j_0+1}$,
such that the path 
$V_0\gt{w''_{j_0}}W_{j_0+1}\gt{w_{j_0+1}}W_{j_0+2}\gt{w_{j_0+2}}\cdots$
does not visit any subterm of $V_0$.
Then the sequence of
bal-results related to $W_{j_0+1},W_{j_0+2},W_{j_0+3},\dots$ 
can be presented as an $(n,g)$-sequence,
where $n,g$ are determined by $\calG$.

The proof of the next lemma just makes clear all relevant 
technical details.

\begin{lemma}\label{lem:forcingngseq}
There are $n_0,g_0$ ($n_0\in\Nat$, $g_0:\Nat_+\rightarrow\Nat_+$)
determined by (in fact, computable from) grammar
$\calG$ such that Prover can force for any initial $T_0\sim U_0$ that
she either wins or the sequence $(T_1,U_1), (T_2,U_2),\dots$ has an
infinite $(n_0,g_0)$-subsequence.
\end{lemma}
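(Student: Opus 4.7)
The plan is to play the ``simple balancing strategy'' sketched above, to isolate the key \emph{sinking lemma} that governs rounds in which Prover does not balance, and finally to extract the required $(n_0,g_0)$-subsequence via a ``deepest subterm'' argument. Let $d_{\max}$ be the maximum right-hand-side depth and $m_{\max}$ the maximum arity in $\calG$, and set $M'_0=d_{\max}$ with $M_1$ sufficiently larger (a grammar-dependent multiple of $M'_0$ suffices for the sinking lemma). Prover's strategy: in every round play $k=M_1$ with all covers $\calB_0\iscov\calB_1\iscov\cdots\iscov\calB_{M_1}$ kept inside $\sim$ (possible by iterating Prop.~\ref{prop:simplecovering}(3) from $T_i\sim U_i$), and in point~3 perform a left- or right-balance of head-depth at most $M'_0$ whenever one is available and compatible with the no-switch rule, otherwise leave $(T_{i+1},U_{i+1})=(T'_i,U'_i)$. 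The invariant $T_i\sim U_i$ is then preserved by Prop.~\ref{prop:congruence}(2) together with Prop.~\ref{prop:basicreplace}, so the play either eventually triggers a repeat (and Prover wins) or continues forever.

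\textbf{Sinking lemma -- the main obstacle.}
The technical heart is to show: if round $i{+}1$ produces a left-balance with bal-result $(T_{i+1},U_{i+1})=(G\sigma',U'_i)$ of head-depth $\depth(G)\leq M'_0$ (so every $V\in\range(\sigma')$ is reachable from the pivot $W=U_i$ within $M_1$ steps of $\calL^\ltsrul_\calG$), and if round $i{+}2$ admits no left-balance of head-depth $\leq M'_0$, then the unique $\calL^\ltsrul_\calG$-path $G\sigma'\gt{u}T'_{i+1}$ of length $M_1$ factors as $G\sigma'\gt{u'}x_\ell\sigma'\gt{u''}T'_{i+1}$ with $|u'|\leq M'_0$ for some variable $x_\ell$ of $G$. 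I argue by contradiction: otherwise the path stays inside $G$, yielding $T'_{i+1}=G^*\sigma'$ with $G^*$ obtained from $G$ by abstract head rewriting; each $x_\ell\sigma'\in\range(\sigma')$ remains popping-reachable from $T_{i+1}$ within $\depth(G)\leq M'_0<M_1$ steps, and since Prover's covers lie in $\sim$ these values obtain matching partners in $\bigcup_{j<M_1}\calB_j$ at round $i{+}2$, so a suitable shallow truncation of $G^*$ furnishes a valid shallow left-balance, contradicting the assumption. The factorisation then gives $W\close_{2M_1}(T_{i+2},U_{i+2})$, and the exposed $x_\ell\sigma'$ sits strictly ``below'' the bal-result in the subterm structure reachable from $W$. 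I expect this step to be the hardest, because verifying that the truncation really delivers a valid shallow left-balance requires careful accounting of which cover pairs can be assembled inside $\sim$ at round $i{+}2$.

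\textbf{Pivot chain and extraction.}
Assume no repeat occurs (else Prover wins). A pigeonhole argument shows each concrete pivot recurs only finitely often: otherwise the deterministic step-structure of $\calL^\ltsrul_\calG$ together with the finite set of Refuter's available responses would force a repeat among the $(T_i,U_i)$. Hence there are infinitely many balancing rounds and, by left/right symmetry, I may assume infinitely many left-balances at indices $r_1<r_2<\cdots$; put $W_j=U_{r_j-1}$. Applying the sinking lemma across each stretch of non-balance rounds shows that each $W_{j+1}$ is $\calL^\ltsrul_\calG$-reachable within $O(M_1)$ steps from $W_j$ or from a previously exposed subterm $x_\ell\sigma'_j$. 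Since $W_1$ is regular and hence has only finitely many distinct subterms, there is a deepest subterm $V_0$ of $W_1$ visited on this pivot path, say in the segment from $W_{j_0}$ to $W_{j_0+1}$; by maximality no proper subterm of $V_0$ appears thereafter. Letting $\sigma$ send $x_1,\dots,x_m$ (with $m=\arity(\termroot(V_0))$) to the root-successors of $V_0$, every $W_j$ and every component of its bal-result for $j\geq j_0$ admits a presentation $E\sigma$ with $E$ a \emph{finite} term whose depth grows only linearly in $j$ (each round adds at most $M_1\cdot d_{\max}$ depth, and each balancing adds a head of depth $\leq M'_0$). Hence $\pressize(E_j,F_j)\leq g_0(j)$ for a function $g_0$ computable from $\calG$, and the bal-result pairs at rounds $r_{j_0},r_{j_0+1},\ldots$ form the desired infinite $(n_0,g_0)$-subsequence with $n_0=m_{\max}$.
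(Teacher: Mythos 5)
Your proposal reproduces the high-level architecture of the paper's proof (balancing strategy with bounded heads, no side-switching, pivot chain, deepest-subterm extraction), but the two places where you yourself flag difficulty are exactly where the argument is incomplete, and a third choice of constant is wrong.

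First, the ``sinking lemma''. Your contrapositive formulation (``no shallow left-balance available $\Rightarrow$ the path factors through a variable of $G$'') puts the burden on showing that whenever the path does \emph{not} sink through the head, a shallow balance \emph{is} available; your contradiction argument (``otherwise the path stays inside $G$'') does not establish this, since a length-$M_1$ path that fails to expose a variable of $G$ need not stay inside $G$ at all --- it can push and build arbitrary new material above deep subterms. The paper resolves this by introducing $(A,i)$-sink words, the constant $M_0$ bounding their length, and the notions of \emph{non-sink segment} and \emph{sinking path}; Prover's balancing trigger is then ``the path is not shortest or contains a non-sink'', and one proves (i) a shortest non-sinking path always yields a head of depth at most $M'_0=(1+M_0)\cdot\maxruleheight$, and (ii) a shortest sinking path of length $M_1$ with $M_1\div M_0>M'_0$ necessarily exposes a variable of $G$. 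Your $M'_0=d_{\max}$ (the maximal rhs depth) is too small: the head produced after the last non-sink is reachable in up to $M_0$ further steps from a rule's rhs, and $M_0$ can be exponential in $\calG$, so the head's depth genuinely involves $M_0$, not just $d_{\max}$. Without this machinery the case split ``balance available / path sinks'' is not exhaustive.

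Second, the extraction. Taking $\sigma$ to map onto the $m$ root-successors of $V_0$ and $n_0=m_{\max}$ does not suffice: a bal-result related to the pivot $G_\ell\sigma'$ is assembled from terms reachable from $G_\ell\sigma'$ within $M_1$ steps, and such a path can sink into a root-successor of $V_0$ and expose a proper subterm of it at depth up to $M_1$; that subterm is not in $\range(\sigma)$ and cannot be written as $E\sigma$ with $E$ finite. The paper therefore cuts $V_0=F\sigma$ at depth $M_1$, so that every occurrence of a support variable sits at depth at least $M_1$ and no length-$\le M_1$ path can expose anything below $\range(\sigma)$; this forces $n_0=m^{M_1}$. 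Finally, restricting the pivot chain to left-balances only breaks the reachability link between consecutive tracked pivots when right-balances are interleaved (a right-balance replaces the right component by a term not reachable from the previous right component); the paper avoids this by chaining \emph{all} pivots regardless of side.
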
	

\begin{proof}
We first introduce some technical notions related
to a given grammar $\calG=(\calN,\act,\calR)$; 
in our notation we
assume that $\arity(A)=m$ for all $A\in\calN$.

If $A(x_1,\dots,x_m)\gt{w}x_i$ in $\calL^\ltsrul_\calG$, then we call
$w\in\calR^*$ an  \emph{$(A,i)$-sink word}.
We assume that for each pair  
$A\in\calN$, $i\in[1,m]$ there is a fixed 
shortest $(A,i)$-sink word $w_{[A,i]}$, and we put 
\begin{center}
$M_0=1+\max\,\{\,|w_{[A,i]}|; A\in\calN, i\in
	[1,m]\,\}$.
\end{center}
The words $w_{[A,i]}$ can be found and 
$M_0$
can be computed by a standard dynamic programming
approach.

\begin{quote}
{\small
We note that  $|w_{[A,i]}|=1$ if there is a rule
$A(x_1,\dots,x_m)\gt{a}x_i$; otherwise
 $|w_{[A,i]}|=1+|u|$ where $u$ is a shortest word such that 
 $E\gt{u}x_i$ for a rule $A(x_1,\dots,x_m)\gt{a}E$; moreover, the path
 $E\gt{u}x_i$ ``sinks'' along a branch in $E$ till a leaf $x_i$, and
 can be composed from the relevant shorter words
$w_{[B,j]}$. Though $M_0$ can be exponential (as demonstrated by the rules
$A_k(x_1)\gt{a}A_{k-1}(A_{k-1}(x_1)), 
A_{k-1}(x_1)\gt{a}A_{k-2}(A_{k-2}(x_1)),\cdots, 
A_2(x_1)\gt{a}A_{1}(A_{1}(x_1)), A_1(x_1)\gt{a}x_1$), it can be
computed in polynomial time.

If we find that there are no $(A,i)$-sink words for some $A,i$, then 
the $i$-th root-successor of any $A(G_1,\dots,G_m)$ plays no role
(i.e., its replacing does not change the equivalence class);  we
can then simply omit the $i$-th root-successors when the root is $A$.
We can thus 
decrease
$\arity(A)$, and modify the grammar rules accordingly
so that the LTSs  $\calL^\ltsrul_\calG$ and $\calL^\ltsact_\calG$ 
do not change, up
to isomorphism. We can thus indeed safely assume that 
there are $w_{[A,i]}$ for all $A\in\calN$,
$i\in[1,\arity(A)]$.
}
\end{quote}
A \emph{path} $V\gt{u}$ in $\calL^\ltsrul_\calG$
is \emph{root-performable},
if $A(x_1,\dots,x_m)\gt{u}$ where 
 $A=\termroot(V)$
(in which case  $u$ is enabled by any term with the root $A$).
A \emph{path} $V\gt{w}$ in $\calL^\ltsrul_\calG$
is a \emph{non-sink segment}, a \emph{non-sink} for short,
if $|w|=M_0$ and $V\gt{w}$ is
root-performable.
(For each root-successor $V'$ in $V$ we thus have
$V\gt{v}V'$ for some $v$ shorter than $w$.)

A \emph{path} $T\gt{u}T'$  in $\calL^\ltsrul_\calG$
is \emph{sinking} if it contains no non-sink,
i.e., for any partition $u=u_1u_2u_3$ with $|u_2|=M_0$ we have 
$u_2=u'_{2}u''_{2}$ ($u'_2\neq\varepsilon$) where
$T\gt{u_1}V'\gt{u'_{2}}V''\gt{u''_{2}u_3}T'$ and $V''$ is a
root-successor in $V'$.
Hence if $T\gt{u}T'$ is sinking, then 
 it can be written $T\gt{u_1}V\gt{u_2}T'$ where $|u_2|<M_0$ and
 $V$ is a
subterm of $T$ in depth at least $|u|\div M_0$.
(By $\div$ we denote integer division.)

Finally we consider a \emph{shortest path}
$T\gt{u}T'$ \emph{from $T$ to $T'$ that is not sinking}.
It can be written 
$T\gt{u_1}V\gt{u_2}V'\gt{u_3}T'$ where 
$V\gt{u_2}V'$ is the last non-sink. 
We can easily check that then $T'=G\sigma$ where 
$\depth(G)\leq M'_0$, for some $M'_0$ determined by $\calG$,
and $\range(\sigma)$ consists of the root-successors in $V$.

\begin{quote}
	{\small	
To verify the claim, we first note 
that we cannot have $T\gt{u_1}V\gt{u_2}V'\gt{u_{31}}V''\gt{u_{32}}T'$
where $V''$ is a root-successor in $V$, since there would be a shorter
path $T\gt{u_1}V\gt{w}V''\gt{u_{32}}T'$ from $T$ to $T'$ (for $w$
being
the relevant sink-word $w_{[A,i]}$, satisfying $|w|<M_0$).
Hence $V\gt{u_2u_3}T'$  is root-performable, and we have
$V=(A(x_1,\dots,x_m))\sigma\gt{u_2u_3}T'=G\sigma$ where
$A(x_1,\dots,x_m)\gt{u_2u_3}G$ and
$\range(\sigma)$ consists of the root-successors in $V$.
Since we took the last non-sink in $T\gt{u}T'$,
the path $V\gt{u_{2}u_3}T'$, and thus also 
$A(x_1,\dots,x_m)\gt{u_2u_3}G$,
is sinking after its first
step. Therefore $G$ is reachable within less than $M_0$ steps from a
subterm of the rhs $E$ of a rule $A(x_1,\dots,x_m)\gt{a}E$ in the set 
$\calR$ of rules in the grammar $\calG$. We can thus (generously) put
$M'_0=(1+M_0)\cdot\maxruleheight$ where 
$\maxruleheight=\max\,\{\,\depth(F)\mid $ there is a rule
$B(x_1,\dots,x_m)\gt{a}F$ in $\calR\}$.
}
\end{quote}
We now take $M_1$ such that $(M_1\div M_0)> M'_0$,
and show Prover's strategy in the $(i{+}1)$-th round, when starting 
with 
$T_i\sim U_i$:
		\begin{enumerate}[i/]
	\item
		Prover chooses $k=M_1$ 
		and
		$\{(T_i,U_i)\}=\calB_0\iscov\calB_1\iscov\calB_2\iscov\ldots\iscov\calB_{M_1}$
		where		
		$\calB_j\subseteq\sim$ (for all $j\in[1,M_1]$).
		Refuter chooses $(T'_i,U'_i)\in\calB_{M_1}$ and 
		we can fix some paths 
$T_i\gt{u_1}T'_i$, $U_i\gt{u_2}U'_i$
in $\calL^\ltsrul_\calG$,
where $|u_1|=|u_2|=M_1$, $\lab(u_1)=\lab(u_2)$ (recall again
Fig.~\ref{fig:roundsketch}).
	\item 
If $T_i\gt{u_1}T'_i$ is not a shortest path from $T_i$ to $T'_i$ or contains a non-sink,
and \emph{Prover did not do a right-balancing in the (previous) 
$i$-th round}, then she makes 
a left-balancing, 
replacing $(T'_i,U'_i)=(G\sigma,U'_i)$ with 
$(T_{i+1}, U_{i+1})=(G\sigma',U'_i)$, for some head $G$ with the
		smallest possible height. 
		(We know that $\depth(G)\leq M'_0$.)
	\item
If ii/ did not apply, and 
$U_i\gt{u_2}U'_i$ is not shortest or contains a non-sink,
and Prover did not do a left-balancing in the 
		$i$-th round, then she 
		makes 
a right-balancing, symmetrically to ii/. 
	\item
If none of ii/, iii/ applied, 
Prover 
puts $(T_{i+1},U_{i+1})=(T'_i,U'_i)$.
\end{enumerate}
Before analysing the outcome of the strategy, 
we recall that each
bal-result $(T_{i+1},U_{i+1})$ has its pivot $W$, where
$W\closelh{M'_0}_{M_1} (T_{i+1},U_{i+1})$ 
or $W\closerh{M'_0}_{M_1} (T_{i+1},U_{i+1})$
(recall the definition in~(\ref{eq:closelh})),
and we explore the case when Prover  does a left-balancing 
in the $(i{+}1)$-th round, with the pivot
$W=U_i$,
but she cannot do a left-balancing (and thus any balancing) in the 
$(i{+}2)$-th round, as depicted in Fig.~\ref{fig:tworounds}. (We omit the case with a right-balancing, since it
is symmetric.)

In the mentioned case we have
$W\closelh{M'_0}_{M_1} (T_{i+1},U_{i+1})= (G\sigma',U_{i+1})$, 
where $\depth(G)\leq M'_0$ and each $V'\in\range(\sigma')$ is reachable
from $W$ within $M_1$ steps.
Now 
the respective path $T_{i+1}=G\sigma'\gt{u}T'_{i+1}$ (created in the 
$(i{+}2)$-th round) is sinking (and
shortest). But then   
$W\close_{2M_1}(T_{i+2},U_{i+2})$ 
as can be easily verified.
\begin{quote}
	{\small	
Indeed, 
we have chosen $M_1$ large enough 
($(M_1\div M_0)> M'_0\geq\depth(G)$) so that the sinking path 
$G\sigma'\gt{u}T'_{i+1}$ can be written
$G\sigma'\gt{u'}x_\ell\sigma'\gt{u''}T'_{i+1}$
(where $G\gt{u'}x_\ell$); informally, the path 
sinks along a branch of $G$ until a leaf $x_\ell$
of $G$ (where $x_\ell\sigma'$ hangs). Since 
$x_\ell\sigma'$
is reachable from $W$ within $M_1$ steps, we have
$W\gt{v}x_\ell\sigma'\gt{u''}T'_{i+1}$ where $|vu''|\leq 2M_1$.
On the other hand, our definitions yield 
that $W=U_i\gt{v_1}U_{i+1}\gt{v_2}U_{i+2}$ for some words
$v_1,v_2$ where each has the length $M_1$. 
}
\end{quote}
We now
explore an infinite play from $T_0\sim U_0$ where Prover uses the
above strategy. We first note that
there are \emph{infinitely many balancings};
otherwise from some round on we would have constant 
sinking on both sides, which necessarily leads to a repeat since our
terms are regular.
\begin{quote}
	{\small	
Suppose we have 
$T_i\gt{u_i}T_{i+1}\gt{u_{i+1}}T_{i+2}\gt{u_{i+2}}\cdots$
where all paths $T_{i+j}\gt{u_{i+j}}T_{i+j+1}$ (each of length $M_1$)
are sinking. Then $T_{i+1}$ is reachable from a subterm of $T_i$ 
in less than $M_0$
steps, and we can thus write  $T_{i+1}=G\sigma$ where $\depth(G)\leq M'_0$
and all $V\in\range(\sigma)$ are subterms of $T_i$.
Then the path $T_{i+1}=G\sigma\gt{u_{i+1}}T_{i+2}$ first sinks along a
branch of $G$ until exposing a subterm of $T_i$; hence 
$T_{i+2}$ is also reachable from a subterm of $T_i$ 
in less than $M_0$ steps. Inductively we thus derive 
that each
$T_{i+j}$ is reachable from a subterm of $T_i$ in less than $M_0$
steps, hence all $T_i, T_{i+1}, T_{i+2}, \dots$ range over a finite
set. (Similarly  $U_i, U_{i+1}, U_{i+2}, \dots$
would range over a finite
set when there were only finitely many balancings.)
}
\end{quote}
We denote the pivots of our infinitely many balancings by
$W_1,W_2,W_3,\dots$, and we easily verify
that for each $j$ we have a path  
$W_j\gt{w_j}W_{j+1}$ (in $\calL^\ltsrul_\calG$)
of the form  
\begin{equation}\label{eq:consecpivots}
W_j\gt{v_1}V'\gt{v_2}V''\gt{v_3}W_{j+1}
\end{equation}
where $|v_1|, |v_3|$ are bounded (surely by $2M_1$) and 
$V''$ is a subterm of $V'$; though $v_2$ can be sometimes long (and
sometimes empty), we can
assume the path $V'\gt{v_2}V''$ to be sinking.

\begin{quote}
	{\small	
In the case of balancings in two consecutive rounds
(which are then both left-balancings, or both right-balancings), 
with pivots $W_j$
and $W_{j+1}$, we have $W_j\gt{u}W_{j+1}$ where $|u|=M_1$.
Suppose now two consecutive balancings, with pivots
$W_j\in\{T_{i_1},U_{i_1}\}$ and $W_{j+1}\in\{T_{i_2},U_{i_2}\}$
that did not happen in two consecutive rounds, hence $i_2\geq i_1{+}2$.
By our above analysis we have 
$W_j\close_{2M_1}(T_{i_1+2},U_{i_1+2})$, and the strategy implies 
that we have either
$T_{i_1+2}\gt{u_3}T_{i_1+3}\gt{u_4}T_{i_1+4}\gt{u_5}\cdots
\gt{u_{i_2}}T_{i_2}=W_{j+1}$
or $U_{i_1+2}\gt{u_3}U_{i_1+3}\gt{u_4}U_{i_1+4}\gt{u_5}\cdots
\gt{u_{i_2}}U_{i_2}=W_{j+1}$ where each (sub)path 
$T_{i_1+\ell-1}\gt{u_\ell}T_{i_1+\ell}$, or
$U_{i_1+\ell-1}\gt{u_\ell}U_{i_1+\ell}$,
has length $M_1$ and is sinking.
Similarly as previously, we derive that $W_{j+1}$ is reachable in
less than $M_0$ steps from a subterm of either $T_{i_1+2}$ or
$U_{i_1+2}$.
		}
	\end{quote}
Suppose now that the ``pivot path''
\begin{center}
$W_1\gt{w_1}W_{2}\gt{w_2}W_{3}\gt{w_3}\cdots$
\end{center} 
visits
subterms of $W_1$ infinitely often.
Then the pivots $W_j$ are
infinitely often boundedly reachable from a subterm of $W_1$,
as follows
from the form~(\ref{eq:consecpivots})
of paths $W_j\gt{w_j}W_{j+1}$. In this case one
pivot reappears infinitely often;
but there
are boundedly many bal-results for one pivot, and we would thus have a repeat.
\begin{quote}
	{\small	
Recall that the bal-result $(T,U)$ related to pivot $W$ satisfies
$W\closelh{M'_0}_{M_1} (T,U)$ or $W\closerh{M'_0}_{M_1} (T,U)$
(as defined in~(\ref{eq:closelh})); hence we have boundedly many
possible $(T,U)$ for one $W$.
		}
	\end{quote}
Some segment $W_{j_0}\gt{w_{j_0}}W_{j_0+1}$ thus visits a subterm
of $W_1$, denoted by $V_0$, for the last time. 
Hence $W_{j_0}\gt{w'_{j_0}}V_0\gt{w''_{j_0}}W_{j_0+1}$,
and 
the infinite path $V_0\gt{w''_{j_0}}\gt{w_{j_0+1}}\gt{w_{j_0+2}}\cdots$ is
root-performable; for $A=\termroot(V_0)$ we have
\begin{equation}\label{eq:Arootperform}
A(x_1,\dots,x_m)\gt{w''_{j_0}}G_1\gt{w_{j_0+1}}G_2\gt{w_{j_0+2}}G_3\gt{w_{j_0+3}}\cdots.
\end{equation}
Hence $V_0=(A(x_1,\dots,x_m))\sigma'$ and $W_{j_0+\ell}=G_\ell\sigma'$
($\ell=1,2,\dots$) for
$\sigma'$ whose range consists of the root-successors in $V_0$. 
We also note that
$\depth(G_\ell)$ can only boundedly grow (with
growing $\ell$).
\begin{quote}
	{\small	
		By the form of the paths~(\ref{eq:consecpivots}),
		we know that 
		$G_{\ell+1}$ is boundedly reachable from a
subterm of $G_{\ell}$; to be more precise, $G_{\ell+1}$
is reachable within $M_0$ steps from a subterm of a term that is
reachable within $2M_1$ steps from $G_\ell$. Hence we surely have
$\depth(G_{\ell+1})\leq \depth(G_{\ell})+ 
(2M_1{+}M_0)\cdot\maxruleheight$ where 
$\maxruleheight=\max\,\{\,\depth(F)\mid $ there is a rule
$B(x_1,\dots,x_m)\gt{a}F$ in $\calR\}$.
Since $\depth(G_{1})\leq 1+ 
(2M_1{+}M_0)\cdot\maxruleheight$, we have 
\\
$\depth(G_{\ell})\leq 1+ 
\ell\cdot(2M_1{+}M_0)\cdot\maxruleheight$.
}
\end{quote}
We are interested in the bal-results related to 
$W_{j_0+1}, W_{j_0+2},W_{j_0+3},\dots$, i.e.,
to $G_1\sigma'$, 
$G_2\sigma'$, $G_3\sigma'$, $\dots$. 
Since the bal-result
$(T,U)$
related to $G_\ell\sigma'$ satisfies 
$G_\ell\sigma'\closelh{M'_0}_{M_1} (T,U)$
or $G_\ell\sigma'\closerh{M'_0}_{M_1} (T,U)$,  
it is built from some finite bounded ``head-terms'', and some
``tail-terms'' that are subterms of $G_\ell\sigma'$ in depth at most
$M_1$. 
\begin{quote}
	{\small	
A path $W\gt{v}V$ obviously cannot ``expose'', i.e.  ``sink to'',
a subterm of $W$ that is deeper than $|v|$.
}
\end{quote}
It is useful to rather write  
\begin{equation}\label{eq:VAF}
V_0=(A(x_1,\dots,x_m))\sigma'=F\sigma
\end{equation}
for a finite term $F$ in which each branch has length $M_1$
if it is not a complete branch of $V_0$, and where $\range(\sigma)$ consists
of the subterms of $V_0$ with depth $M_1$. 
\begin{quote}
	{\small	
To get $F$ and $\sigma$, for each particular 
occurrence of a subterm $U$ of $V_0$ that has depth
$M_1$ in $V_0$ we do the following:
we replace this occurrence of $U$ with a fresh variable $x_i$ and
we put $\sigma(x_i)=U$. The resulting term $F$ is a finite term with
$\depth(F)\leq M_1$, and
$\support(\sigma)$ consists of at most $m^{M_1}$ variables,
where $m$ is the maximum arity of noterminals of the grammar $\calG$.
Putting 
\begin{center}
$n_0= m^{M_1}$,
\end{center}
we get 
$|\support(\sigma)|\leq n_0$.
}
	\end{quote}
Recalling~(\ref{eq:Arootperform}) and~(\ref{eq:VAF}), we have
	\begin{center}
		$F\gt{w''_{j_0}}H_1\gt{w_{j_0+1}}H_2\gt{w_{j_0+2}}H_3\gt{w_{j_0+3}}\cdots$
\end{center}
where 
$W_{j_0+\ell}=G_\ell\sigma'=H_\ell\sigma$,
$\depth(H_\ell)\leq \depth(G_\ell)+M_1$,
and 
each
occurrence of $x_i\in\support(\sigma)$ in $H_\ell$ has depth at least
$M_1$ (for $\ell=1,2,\dots$).
The bal-result $(T,U)$ related to $W_{j_0+\ell}=H_\ell\sigma$ (satisfying
$H_\ell\sigma\closelh{M'_0}_{M_1} (T,U)$
or $H_\ell\sigma\closerh{M'_0}_{M_1} (T,U)$) can be thus written
\begin{center}
$(T,U)=(E_\ell\sigma, F_\ell\sigma)$ 
\end{center}
for finite terms $E_\ell,F_\ell$ whose
height, and thus also size,
can only boundedly grow with growing $\ell$
(since $\depth(H_\ell)$ can only boundedly grow with growing $\ell$).

\begin{quote}
{\small	
	If $H_\ell\sigma\gt{v}V$ (in $\calL^\ltsrul_\calG$) 
	where $|v|\leq M_1$,
	then $H_\ell\gt{v}H'$ where $V=H'\sigma$, since 
	the subterm-occurrences with depth at least $M_1$ in 
	$H_\ell\sigma$ need at least $M_1$ steps for being exposed.
Moreover, $\depth(H')\leq \depth(H_{\ell})+ 
M_1\cdot\maxruleheight$ (where $\maxruleheight$ bounds the 
height-increase in one step).
	
Recall that $H_\ell\sigma\closelh{M'_0}_{M_1} (T,U)$ entails 
$H_\ell\sigma\close_{M_1}U$, and 
$T=G\sigma''$ where $\depth(G)\leq M'_0$ and 
$H_\ell\sigma\close_{M_1}V$ for each
$V\in\range(\sigma'')$.
Hence each term from the set  $\{U\}\cup\{V\mid V\in\range(\sigma'')\}$
can be written in the form
$E'\sigma$ for some $E'$ with 
$\depth(E')\leq \depth(H_{\ell})+ M_1\cdot\maxruleheight$.
Therefore we can write $U=F_\ell\sigma$ and $T=G\sigma'''\sigma$
where the height of $F_\ell$ and of each $E'\in\range(\sigma''')$
is bounded by $\depth(H_{\ell})+ M_1\cdot\maxruleheight$.
Finally we put $E_\ell=G\sigma'''$. Hence
$(T,U)=(E_\ell\sigma,F_\ell\sigma)$, and we surely have
$\pressize(E_\ell,F_\ell)\leq 2\cdot (m^\textsc{H})^2$
where $m$ is the maximal arity of nonterminals and
$\textsc{H}=M'_0+\depth(H_\ell)+M_1\cdot\maxruleheight$.
Since $\depth(H_\ell)\leq \depth(G_\ell)+M_1$ and
$\depth(G_{\ell})\leq 1+ 
\ell\cdot(2M_1{+}M_0)\cdot\maxruleheight$, we get
\begin{center}
a function $g_0$, determined by the grammar $\calG$, 
\end{center}
for which 
$\pressize(E_\ell,F_\ell)\leq g_0(\ell)$, for $\ell=1,2,3,\dots$.
}
\end{quote}
Hence  the bal-results related to the pivots
$W_{j_0+1}, W_{j_0+2},W_{j_0+3},\dots$, i.e.,
to $H_1\sigma$, 
$H_2\sigma$, $H_3\sigma$, $\dots$, can be presented as   an $(n_0,g_0)$-sequence
\begin{center}
$(E_1\sigma, F_1\sigma),(E_2\sigma, F_2\sigma),(E_3\sigma, F_3\sigma),
\dots$, 
\end{center}
where $n_0,g_0$ that are determined by the
grammar $\calG$. 
\qed
\end{proof}

\textbf{The lengths of eqlevel-decreasing $(n,g)$-sequences are
bounded.}
Before proving Lemma~\ref{lem:realboundng}
we show
some useful facts and convenient notions, assuming a grammar
$\calG=(\calN,\act,\calR)$.
We first recall that $\eqlevel(E,F)\leq \eqlevel(E\sigma,F\sigma)$,
and note: 

\begin{proposition}\label{prop:bisimgetequation}
If $\eqlevel(E, F)= k<e=
\eqlevel(E\sigma, F\sigma)$
(where $e\in\Nat\cup\{\omega\}$)
then 
there are $x_i\in\support(\sigma)$, $H\neq x_i$,
and $w\in\act^*$, where $|w|\leq k$, such that
$E\gt{w}x_i$, $F\gt{w}H$ or $E\gt{w}H$, $F\gt{w}x_i$, and
$x_i\sigma\sim_{e-k} H\sigma$.
\end{proposition}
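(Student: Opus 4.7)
I proceed by induction on $k$. The intuition is that $\eqlevel(E,F)=k<e=\eqlevel(E\sigma,F\sigma)$ says an attack witnessing $E\not\sim_{k+1}F$ somehow disappears after $\sigma$ is applied; since $\sigma$ alters behaviour only at variables in $\support(\sigma)$, the attack must end at a pair where one side has reached a variable $x_i\in\support(\sigma)$ and the other side has reached a non-variable $H$, with $x_i\sigma\sim_{e-k}H\sigma$. The strategy is to play the optimal attacker in the $(E,F)$-game while tracking the defender's matching moves in the $(E\sigma,F\sigma)$-game, and to locate the mismatch as soon as it appears.

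\emph{Base case} $k=0$. Since $E\not\sim_1 F$, some action $a$ is enabled by exactly one side; say $E\gt{a}E'$ while $F$ has no $a$-transition. Because $E\sigma\sim_1 F\sigma$, the term $F\sigma$ must enable $a$. If $F$ were a non-variable, then transitions from $F\sigma$ would correspond bijectively to those from $F$, a contradiction; hence $F=x_i$, and the convention $x_j\not\sim_1 H$ for $H\neq x_j$ forces $x_i\in\support(\sigma)$. Taking $w=\varepsilon$ and $H=E$, we get $H\neq x_i$ (as $E$ performs a standard action and so is not a variable), and $x_i\sigma=F\sigma\sim_e E\sigma=H\sigma$, which is $x_i\sigma\sim_{e-k}H\sigma$. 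The opposite side is symmetric.

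\emph{Inductive step} $k\geq 1$. Now $E\sim_1 F$, so neither $E$ nor $F$ is a variable (else $\sim_1$ would force $E=F$, contradicting $\eqlevel(E,F)<\omega$). Fix an optimal attack witnessing $E\not\sim_{k+1}F$: WLOG there is $E\gt{a}E'$ such that $E'\not\sim_k F''$ for every $F\gt{a}F''$. Lift to $E\sigma\gt{a}E'\sigma$; since $E\sigma\sim_e F\sigma$, the defender supplies some $F\sigma\gt{a}Q$ with $E'\sigma\sim_{e-1}Q$. Because $F$ is a non-variable, $Q=F''\sigma$ for a unique $F\gt{a}F''$. This $F''$ is a legal response to the attack, so $\eqlevel(E',F'')\leq k-1$, while $\eqlevel(E'\sigma,F''\sigma)\geq e-1>k-1$. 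Apply the inductive hypothesis to $(E',F'')$ to obtain $x_i\in\support(\sigma)$, $H\neq x_i$, and $w'$ of length at most $\eqlevel(E',F'')$ satisfying the required path and bisimilarity conditions; set $w=aw'$. Then $|w|\leq k$, the path conditions extend one step, and $\eqlevel(E'\sigma,F''\sigma)-\eqlevel(E',F'')\geq(e-1)-(k-1)=e-k$ yields $x_i\sigma\sim_{e-k}H\sigma$.

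\emph{Main obstacle.} The subtlety is that the defender's response $Q$ in the $\sigma$-game must arise from an $F$-transition, so that one and the same $F''$ simultaneously carries $\eqlevel(E',F'')\leq k-1$ (from the attack) and $\eqlevel(E'\sigma,F''\sigma)\geq e-1$ (from the defender's choice), enabling the induction. This is exactly what the non-variable status of $F$ delivers, and its failure in the base case is what locates the witness $x_i$. The index arithmetic ($|w|\leq k$ and the $\sim_{e-k}$ bound) is then routine monotonicity.
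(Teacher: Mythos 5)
Your proof is correct and follows essentially the same route as the paper's: both track an optimal attack witnessing $E\not\sim_{k+1}F$ against the defender's matching play for $(E\sigma,F\sigma)$ and locate the first point where a variable of $\support(\sigma)$ faces a non-identical term (the paper packages this as a single $\iscov$-sequence that is ``mimicked'' with the empty substitution, while you unfold it as an induction on $k$). One tiny imprecision: in the base case the parenthetical ``$E$ performs a standard action and so is not a variable'' fails when both $E$ and $F$ are variables (the distinguishing action is then a special action $a_{x_j}$), but the required $H=E\neq x_i=F$ still holds simply because $\eqlevel(E,F)=0$ forces $E\neq F$.
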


\begin{proof}
We take 
$\{(E\sigma,F\sigma)\}=
\calB_0\iscov\calB_1\iscov\cdots\iscov\calB_{k+1}$,
so that $\leasteqlev(\calB_j)=e-j$ for all $j\in[0,k{+}1]$.
(Recall Prop.~\ref{prop:simplecovering}.)
When trying to mimic this sequence by replacing 
$\sigma$ with the empty-support substitution and aiming to create
$\{(E,F)\}=
\calB'_0\iscov\calB'_1\iscov\cdots\iscov\calB'_{k+1}$,
we must get $(x_i,H)$ or $(H,x_i)$ with $H\neq x_i$
in some $\calB'_j$ for $j\leq k$
(instead of the original pair $(x_i\sigma,H\sigma)$ or
$(H\sigma,x_i\sigma)$), 
since 
otherwise we would prove $E\sim_{k+1}F$.
Since $\eqlevel(x_i\sigma,H\sigma)\geq
\leasteqlev(\calB_j)$ for the respective $j\leq k$, we surely have
$x_i\sigma\sim_{e-k}H\sigma$.
\qed
\end{proof}

By $\{(x_i,H)\}$ we denote the substitution that (only) replaces $x_i$ with
$H$ (i.e., $x_i\{(x_i,H)\}=H$ and 
 $x_j\{(x_i,H)\}=x_j$ for $j\neq i$.
 Hence  
$\{(x_i,H)\}\sigma$
is the substitution $\sigma'$ satisfying $x_i\sigma'=H\sigma$ and
$x_j\sigma'=x_j\sigma$ for all $j\neq i$.
We note that the (``limit'' regular) term
\begin{equation}\label{eq:Hprime}
H'=H\{(x_i,H)\}\{(x_i,H)\}\cdots
\end{equation}
is well defined and satisfies $H'=H\{(x_i,H')\}$:
a graph presentation of $H'$ arises from
a graph presentation of $H$ by redirecting each arc leading to $x_i$ (if
there is any) towards the root. (We have $H'=H$ if $x_i$ does not
occur in $H$, or if $H=x_i$.) Hence also
$\pressize(H')\leq\pressize(H)$.
E.g., for the terms in Fig.~\ref{fig:basicterm} we have
$E_2=E_1\{(x_2,E_1)\}\{(x_2,E_1)\}\{(x_2,E_1)\}\cdots\cdots=
E_1\{(x_2,E_2)\}$.

By $\sigma\remxi$ we denote the substitution arising
from $\sigma$ by removing $x_i$ from the
support (if it is there), i.e., 
\begin{center}
$x_i\sigma\remxi=x_i \textnormal{ and } x_j\sigma\remxi=x_j\sigma
	\textnormal{ for } j\neq i.$
\end{center}
	If $H\neq x_i$, then $x_i$ does not occur in $H'$
	defined by~(\ref{eq:Hprime}); we then have
$H'\sigma=H'\sigma\remxi$, and this enables an inductive argument 
in the proof of Lemma~\ref{lem:realboundng}, based on stepwise
decreasing the substitution support (i.e., the number $n$ in
eqlevel-decreasing $(n,g)$-sequences).

Recalling that $\sigma\sim_k\sigma'$ iff $x_j\sigma\sim_k x_j\sigma'$
for all $x_j\in\var$, and referring
to $H'$ in~(\ref{eq:Hprime}), we also note the following fact
(which follows from the congruence properties,
by a  repeated use of Prop.~\ref{prop:congruence}(2)):

\begin{proposition}\label{prop:congrlimit}
If $x_i\sigma\sim_k H\sigma$ 
and $H\neq x_i$,
then $\sigma\sim_k \{(x_i,H')\}\sigma\remxi$. 	
\end{proposition}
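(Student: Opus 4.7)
My plan is to proceed by induction on $k \in \Nat$ and handle $k = \omega$ at the end via $\sim_\omega = \bigcap_{k \in \Nat}\sim_k$. The base case $k = 0$ is trivial since $\sim_0$ is the full product. For the inductive step, given $x_i\sigma \sim_{k+1} H\sigma$ with $H \neq x_i$, I would write $\tau = \{(x_i, H')\}\sigma\remxi$ and observe that $x_j\tau = x_j\sigma$ for every $j \neq i$, so proving $\sigma \sim_{k+1} \tau$ reduces to verifying $x_i\sigma \sim_{k+1} x_i\tau$; since $x_i$ does not occur in $H'$, we have $x_i\tau = H'\sigma\remxi = H'\sigma$, so the real target is $x_i\sigma \sim_{k+1} H'\sigma$.

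I would then split on whether $H$ is a variable. If $H = x_j$ with $j \neq i$, then $H$ has no occurrence of $x_i$, hence $H' = H$, and the desired relation is already the hypothesis. If $H \notin \var$, the premise gives $x_i\sigma \sim_k H\sigma$ (since $\sim_{k+1}\subseteq\sim_k$), so the inductive hypothesis applied at level $k$ yields $\sigma \sim_k \tau$. Now the ``moreover'' clause of Proposition~\ref{prop:congruence}(2), whose side-condition $H \notin \var$ is in force, upgrades this to $H\sigma \sim_{k+1} H\tau$. By associativity of substitution composition together with the fixed-point identity $H\{(x_i, H')\} = H'$ and the absence of $x_i$ from $H'$, one computes $H\tau = (H\{(x_i, H')\})\sigma\remxi = H'\sigma\remxi = H'\sigma$, and chaining $x_i\sigma \sim_{k+1} H\sigma \sim_{k+1} H'\sigma$ closes the induction step. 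For $k = \omega$, the hypothesis gives $x_i\sigma \sim_k H\sigma$ at every finite $k$, hence $\sigma \sim_k \tau$ at every finite $k$, and thus $\sigma \sim_\omega \tau$.

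The only real obstacle is recouping the single equivalence level that would otherwise be lost when invoking the inductive hypothesis: the ``$+1$'' bonus built into Proposition~\ref{prop:congruence}(2) for non-variable heads supplies exactly this, and the small case split for $H \in \var$ handles the degenerate situation in which that bonus is unavailable but where the conclusion is immediate anyway.
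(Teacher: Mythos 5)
Your proof is correct, but it is organized differently from the paper's. You proceed by induction on $k$, recovering the level lost in the inductive hypothesis via the ``moreover'' clause of Prop.~\ref{prop:congruence}(2) (which gives $H\sigma\sim_{k+1}H\tau=H'\sigma$ from $\sigma\sim_k\tau$ when $H\notin\var$) and then chaining $x_i\sigma\sim_{k+1}H\sigma\sim_{k+1}H'\sigma$ by transitivity; the degenerate case $H=x_j$, $j\neq i$, where the bonus is unavailable, is dispatched by noting $H'=H$. The paper instead avoids induction altogether: it reduces the claim to the single equality $\eqlevel(x_i\sigma,H\sigma)=\eqlevel(x_i\sigma,H'\sigma)$, observes that if $e=\eqlevel(H\sigma,H'\sigma)<\omega$ then $e=\eqlevel(H\sigma,H\{(x_i,H')\}\sigma)>\eqlevel(\sigma,\{(x_i,H')\}\sigma)=\eqlevel(x_i\sigma,H'\sigma)$ by the same congruence clause, and concludes by Prop.~\ref{prop:basicreplace}; the variable case is absorbed into the branch $e=\omega$. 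Both arguments hinge on exactly the same ``$+1$'' congruence fact, so the difference is one of packaging: the paper's version is shorter, needs no induction or explicit case split, and actually yields the slightly stronger \emph{equality} of eq-levels, whereas yours establishes only the inequality $\eqlevel(x_i\sigma,H'\sigma)\geq\eqlevel(x_i\sigma,H\sigma)$ --- which is all the proposition asserts and all that is used later --- at the cost of an induction on $k$ and a separate limit step for $k=\omega$.
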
	

\begin{proof}
Assume $H\neq x_i$; hence $x_i$ does not occur in
$H'$, and we also recall that $H'=H\{(x_i,H')\}$, and $H'\sigma=H'\sigma\remxi$.

For $j\neq i$ we obviously have 
$x_j\sigma = x_j\{(x_i,H')\}\sigma = x_j\{(x_i,H')\}\sigma\remxi$.
Hence we will be done if we show that 
$\eqlevel(x_i\sigma,H\sigma)=\eqlevel(x_i\sigma,x_i\{(x_i,H')\}\sigma\remxi)$,
i.e., if we show that
\begin{equation}\label{eq:ELHHprime}
\eqlevel(x_i\sigma,H\sigma)=\eqlevel(x_i\sigma,H'\sigma).
\end{equation}
Let
$\eqlevel(H\sigma,H'\sigma)=e$; this can be also written 
$\eqlevel(H\sigma,H\{(x_i,H')\}\sigma)=e$.

If $e=\omega$, 
then~(\ref{eq:ELHHprime}) is clear.
If $e<\omega$, then 
\begin{center}
$e=
\eqlevel(H\sigma,H\{(x_i,H')\}\sigma)>
\eqlevel(\sigma,\{(x_i,H')\}\sigma)=\eqlevel(x_i\sigma,H'\sigma)$
\end{center}
(where the inequality ``$>$'' follows from
Prop.~\ref{prop:congruence}(2)).
Thus $\eqlevel(H\sigma,H'\sigma)> \eqlevel(x_i\sigma,H'\sigma)$,
and~(\ref{eq:ELHHprime}) follows
by Prop.~\ref{prop:basicreplace}.
\qed
\end{proof}

\begin{quote}
	{\small
		\emph{Remark.}
We discussed a possible decomposition approach after noting the
congruence properties (Prop.~\ref{prop:congruence}). Now
		Propositions~\ref{prop:bisimgetequation}
		and~\ref{prop:congrlimit} also suggest a certain
decomposition approach, as we now sketch. 

Suppose we have $T\sim U$. We can present $(T,U)$ as
$(E\sigma,F\sigma)$ in many ways. 

E.g., if $A=\termroot(T)$ and
$B=\termroot(U)$ then we can put $E=A(x_1,\dots,x_m)$, 
$F=B(x_{m+1},\dots,x_{2m})$ (assuming $\arity(A)=\arity(B)=m$), and
for $i\in[1,m]$ we define $x_i\sigma$ to be the $i$-th
root-successor in $T$
and $x_{i+m}\sigma$ to be the $i$-th
root-successor in $U$.

For $(T,U)=(E\sigma,F\sigma)$ where $T\sim U$ there
are
two possibilities:
\begin{enumerate}
	\item		
either $E\sim F$, in which case $E\sigma'\sim F\sigma'$ for any
$\sigma'$,
	\item
or $\eqlevel(E,F)=k<\omega$.
\end{enumerate}
In the case $2$ we must have  $x_i\in\support(\sigma)$ and $H\neq
x_i$, where $E\close_{k}H$ or  $F\close_{k}H$, 
such that 
\begin{center}
$\sigma\sim \{(x_i,H')\}\sigma\remxi$
\end{center}
(by Prop.~\ref{prop:bisimgetequation} and~\ref{prop:congrlimit}), and
thus $E'\sigma\remxi\sim F'\sigma\remxi$ where
\begin{center}
$E'=E\{(x_i,H')\}$ and $F'=F\{(x_i,H')\}$.
\end{center}

We can even bound the size of $H$, and thus of $H'$, as follows:
$\pressize(H')\leq \pressize(E,F)+k\cdot \sizeinc$,
where $\sizeinc$ is defined in~(\ref{eq:defsizeinc}). 

We also note that for any $E,F,\sigma,x_i,H$ where $H\neq x_i$
(and maybe $E\sigma\not\sim F\sigma$), we have 
\begin{center}
$\eqlevel(E\sigma,F\sigma)\geq
\leasteqlev(\{(x_i\sigma, H'\sigma), (E'\sigma\remxi,F'\sigma\remxi)
\})$,
\end{center}
which can lead to a decomposition if the pairs
$(x_i\sigma, H'\sigma)$, $(E'\sigma\remxi,F'\sigma\remxi)$ are somehow
``smaller'' than $(E\sigma,F\sigma)$.

Moreover, in the case  $E'\sigma\remxi\sim F'\sigma\remxi$ we can
continue in the same way as above:
\begin{enumerate}
	\item		
either $E'\sim F'$, in which case $E'\sigma'\sim F'\sigma'$ for any
$\sigma'$,
	\item
or $\eqlevel(E',F')=k'<\omega$.
\end{enumerate}
In the latter case we get some $x_j\in\support(\sigma\remxi)$ and some
$G$ such that $E'\close_{k'}G$ or  $F'\close_{k'}G$, and 
$E''((\sigma\remxi)\remxj)\sim F''((\sigma\remxi)\remxj)$ where
$E''=E'\{(x_j,G')\}$ and $F''=F'\{(x_j,G')\}$, for
$G'=G\{(x_j,G)\}\{(x_j,G)\}\cdots$.

Continuing this reasoning, we must come to the case $1$ after at most
$n$ iterations where $n=|\support(\sigma)|$, maybe with the
empty-support substitution in the end.

A problem is to define an adequate size of the pairs, to transform the above
observations into a sound algorithm based on the respective
decompositions. In the algorithm based on our Prover-Refuter game we
circumvent this problem; we use the above observations
for a (conditional, nondeterministic)
computation of a bound on eqlevel-decreasing $(n,g)$-sequences.
	}		
\end{quote}	
We still add a few technical notions, useful for proving
Lemma~\ref{lem:realboundng}. For our assumed grammar
$\calG=(\calN,\act,\calR)$ we put 
\begin{equation}\label{eq:defsizeinc}
	\sizeinc=\max\,\{\,\pressize(E)\mid E \text{ is the rhs of a
		rule in }
	\calG\,\}.
\end{equation}	
We note that $F\gt{w}G$ implies $\pressize(G)\leq \pressize(F)+|w|\cdot\sizeinc$.

For any set $\calB\subseteq\trees_\calN\times\trees_\calN$ we put 
\begin{center}
$\maxeqlev(\calB)=\max\{\eqlevel(E,F)\mid (E,F)\in\calB\}$,
\end{center}
stipulating $\max \emptyset=0$.
($\leasteqlev(\calB)$ has been already defined.)

For any $b\in\Nat$, we put
\begin{quote}
$\size_{\leq b}=\{(E,F)\mid  \pressize(E,F)\leq b\,\}$, and
\\
$\bmelb=\maxeqlev(\size_{\leq b}\,\cap\not\sim)$.
\end{quote}
We note that $\bmelb$ is always a \emph{finite} number.

For any $n\in\Nat$ and
$g:\Nat_+\rightarrow\Nat_+$ we define $\ell_{n,g}\in\Nat$
by the following recursive definition:
\begin{quote}
$\ell_{0,g}\ \ \ =\,1+\bmelgone$, and
\\
$\ell_{n+1,g}=1+\bmelgone+\ell_{n,g'}$ where 
\end{quote}
\begin{equation}\label{eq:gprime}
g'(j)=
g(1+\bmelgone+j)+2\cdot(g(1)+\bmelgone\cdot\sizeinc) \textnormal{ for
all } j\in\Nat_+.
\end{equation}

\begin{lemma}\label{lem:realboundng}
Any eqlevel-decreasing $(n,g)$-sequence has length at most 
$\ell_{n,g}$.
\end{lemma}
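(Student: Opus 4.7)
The plan is an induction on $n$. For the base case $n=0$, the substitution $\sigma$ has empty support, so the sequence is $(E_1,F_1),(E_2,F_2),\ldots$ of pairs of finite terms with $\pressize(E_j,F_j)\leq g(j)$. Since the sequence is eqlevel-decreasing we have $\eqlevel(E_1,F_1)<\omega$, hence $E_1\not\sim F_1$, and $\pressize(E_1,F_1)\leq g(1)$ forces $\eqlevel(E_1,F_1)\leq\bmelgone$. The length is thus at most $1+\bmelgone=\ell_{0,g}$.

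For the inductive step I consider an eqlevel-decreasing $(n{+}1,g)$-sequence $(E_j\sigma,F_j\sigma)_{j\geq 1}$ with $|\support(\sigma)|\leq n{+}1$, and write $e_j=\eqlevel(E_j\sigma,F_j\sigma)$, $k_j=\eqlevel(E_j,F_j)$. By Prop.~\ref{prop:congruence}(1), $k_j\leq e_j<\omega$; in particular $E_j\not\sim F_j$ for all $j$, and $\pressize(E_1,F_1)\leq g(1)$ gives $k_1\leq\bmelgone$. If $k_1=e_1$ then the length is at most $1+e_1\leq 1+\bmelgone\leq\ell_{n+1,g}$ and we are done. Otherwise $k_1<e_1$, and Prop.~\ref{prop:bisimgetequation} supplies some $x_i\in\support(\sigma)$, some $H\neq x_i$ and some $w\in\act^*$ with $|w|\leq k_1$ such that (w.l.o.g.) $E_1\gt{w}x_i$, $F_1\gt{w}H$, and $x_i\sigma\sim_{e_1-k_1}H\sigma$.

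Let $H'$ be as in~(\ref{eq:Hprime}); then $\pressize(H')\leq\pressize(H)\leq\pressize(F_1)+|w|\cdot\sizeinc\leq g(1)+\bmelgone\cdot\sizeinc$. Setting $\tau=\{(x_i,H')\}$ and $\sigma''=\sigma\remxi$, Prop.~\ref{prop:congrlimit} yields $\sigma\sim_{e_1-k_1}\tau\sigma''$, whence by Prop.~\ref{prop:congruence}(2) both $E_j\sigma\sim_{e_1-k_1}E_j\tau\sigma''$ and $F_j\sigma\sim_{e_1-k_1}F_j\tau\sigma''$. Whenever $j\geq k_1+2$ we have $e_j\leq e_1-(j-1)<e_1-k_1$, so two applications of Prop.~\ref{prop:basicreplace} give $\eqlevel(E_j\tau\sigma'',F_j\tau\sigma'')=e_j$. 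I reindex the tail by setting, for $j'\geq 1$, $E'_{j'}=E_{j'+k_1+1}\tau$ and $F'_{j'}=F_{j'+k_1+1}\tau$. Using $\pressize(E'_{j'},F'_{j'})\leq\pressize(E_{j'+k_1+1},F_{j'+k_1+1})+2\pressize(H')$, the monotonicity of $g$ and $k_1\leq\bmelgone$, I obtain $\pressize(E'_{j'},F'_{j'})\leq g(j'+\bmelgone+1)+2(g(1)+\bmelgone\cdot\sizeinc)=g'(j')$. Since $|\support(\sigma'')|\leq n$, the tail $(E'_{j'}\sigma'',F'_{j'}\sigma'')_{j'\geq 1}$ is an eqlevel-decreasing $(n,g')$-sequence, hence of length at most $\ell_{n,g'}$ by the induction hypothesis; adding the prefix of length $k_1+1\leq 1+\bmelgone$ yields total length at most $1+\bmelgone+\ell_{n,g'}=\ell_{n+1,g}$.

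The delicate step is the ``support reduction'' obtained by chaining Prop.~\ref{prop:bisimgetequation} and Prop.~\ref{prop:congrlimit}: this rewrites the sequence as one whose substitution has support of size only $n$, at the cost of replacing each head $E_j$ (resp.\ $F_j$) by $E_j\tau$ (resp.\ $F_j\tau$), i.e.\ injecting one extra copy of $H'$. The definition of $g'$ in~(\ref{eq:gprime}) is tailored precisely to absorb both this size increase (the additive $2(g(1)+\bmelgone\cdot\sizeinc)$) and the index shift by $\bmelgone+1$ needed to guarantee $e_j<e_1-k_1$, so that the induction hypothesis applies cleanly.
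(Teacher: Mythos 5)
Your proof is correct and follows essentially the same route as the paper's: induction on the support size, using Prop.~\ref{prop:bisimgetequation} and Prop.~\ref{prop:congrlimit} to replace $\sigma$ by $\{(x_i,H')\}\sigma\remxi$, Prop.~\ref{prop:basicreplace} to preserve the eq-levels of the tail, and the size/index bookkeeping that motivates the definition of $g'$ in~(\ref{eq:gprime}). The only cosmetic difference is that you shift the tail by $k_1+1$ rather than by the uniform $1+\bmelgone$; since $k_1\leq\bmelgone$ and $g$ is nondecreasing, this still fits under $g'$ and yields the same bound $\ell_{n,g}$.
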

\begin{proof}
By induction on $n$.
Assume an eqlevel-decreasing $(n,g)$-sequence 
\begin{center}
$(E_1\sigma,F_1\sigma),(E_2\sigma,F_2\sigma), \dots, 
(E_\ell\sigma,F_\ell\sigma)$,
\end{center}
which also entails 
$E_1\sigma\not\sim F_1\sigma$ by our definition.
Since $\eqlevel(E_1,F_1)\leq
\eqlevel(E_1\sigma,F_1\sigma)$,
we have $E_1\not\sim F_1$; 
moreover, $\eqlevel(E_1,F_1)\leq\bmelgone$ since
$\pressize(E_1,F_1)\leq g(1)$.

If $n=0$, then $(E_1,F_1)=(E_1\sigma,F_1\sigma)$, and thus
\begin{center}
$\ell\leq 1{+}\eqlevel(E_1,F_1)\leq 1{+}\bmelgone=\ell_{0,g}$;
\end{center}
we also
have $\ell\leq \ell_{0,g}$ if
$\eqlevel(E_1,F_1)=\eqlevel(E_1\sigma,F_1\sigma)$.

If $\eqlevel(E_1,F_1)=k<e=\eqlevel(E_1\sigma,F_1\sigma)$, then 
\begin{center}
$\sigma\sim_{e-k}\{(x_i,H')\}\sigma\remxi$ 
\end{center}
for some $x_i\in\support(\sigma)$ and some $H'$ with 
$\pressize(H')\leq g(1)+k\cdot\sizeinc\leq g(1)+\bmelgone\cdot\sizeinc$; this can be easily
derived from Prop.~\ref{prop:bisimgetequation} 
and~\ref{prop:congrlimit}.

We now put (\emph{shift}) $\shift=1{+}\bmelgone$;
hence $\shift>\eqlevel(E_1,F_1)=k$, and thus
\begin{center}
$e{-}k>\eqlevel(E_{\shift+1}\sigma,F_{\shift+1}\sigma)>
\cdots >
\eqlevel(E_{\shift+(\ell-\shift)}\sigma,F_{\shift+(\ell-\shift)}\sigma)$.
\end{center}
For $j=1,2,\dots,\ell-\shift$
we define
\begin{center}
$(E'_j,F'_j)=(E_{\shift+j}\{(x_i,H')\}, F_{\shift+j}\{(x_i,H')\})$.
\end{center}
Since $\eqlevel(E'_j\sigma,E_{\shift+j}\sigma)=
\eqlevel(E_{\shift+j}\sigma,E_{\shift+j}\{(x_i,H')\}\sigma)\geq
e{-}k$, and similarly 
$\eqlevel(F'_j\sigma,F_{\shift+j}\sigma)\geq e{-}k$,
by using Prop.~\ref{prop:basicreplace} we get
\begin{center}
$\eqlevel(E'_j\sigma,F'_j\sigma)=
\eqlevel(E_{\shift+j}\sigma, F_{\shift+j}\sigma)$.
\end{center}
Since $(E'_j\sigma,F'_j\sigma)=(E'_j\sigma\remxi,F'_j\sigma\remxi)$,
we get 
\begin{center}
$e{-}k>\eqlevel(E'_{1}\sigma\remxi,F'_{1}\sigma\remxi)>
\cdots >
\eqlevel(E'_{\ell-\shift}\sigma\remxi,F'_{\ell-\shift}\sigma\remxi)$.
\end{center}
Finally we note that
\begin{center}
$\pressize(E'_j,F'_j)\leq\pressize(E_{j+\shift},F_{j+\shift})+2\cdot\pressize(H)
\leq$\\
$\leq g(j+\shift)+2\cdot(g(1)+\bmelgone\cdot\sizeinc)=g'(j)$,
\end{center}
for $g'$ defined by~(\ref{eq:gprime}).
Hence 
\begin{center}
$(E'_1\sigma\remxi,F'_1\sigma\remxi),
(E'_2\sigma\remxi,F'_2\sigma\remxi),\dots,
(E'_{\ell-\shift}\sigma\remxi,F'_{\ell-\shift}\sigma\remxi)$
\end{center}
is an eqlevel-decreasing $(n{-}1,g')$-sequence.
By the induction hypothesis we have  
$\ell{-}\shift\leq\ell_{n-1,g'}$, and thus
$\ell\leq
1+\bmelgone+\ell_{n-1,g'}=\ell_{n,g}$.
\qed\end{proof}

If $T_0\sim U_0$, then
Prover can force a potentially infinite 
$(n_0,g_0)$-sequence for certain $n_0,g_0$ determined by $\calG$
(by Lemma~\ref{lem:forcingngseq}). 
She could claim a win
after creating an $(n_0,g_0)$-sequence longer than $\ell_{n_0,g_0}$,
if she could demonstrate the value $\ell_{n_0,g_0}$.
(By Lemma~\ref{lem:realboundng} it would be then clear that Refuter does not use the least-eqlevel
strategy or $T_0\sim U_0$.)
Inspecting the above, we can verify that for computing  
$\ell_{n,g}$ for concrete $n,g$ it suffices to know 
$\size_{\leq \bigb}\,\cap\not\sim$ for a sufficiently large $\bigb\in\Nat$,
and the values $g(j)$ for $j$ from a 
sufficiently large initial segment $[1,\insegm]$ of $\Nat$.
We can capture this by the following inductive definition:
\begin{itemize}
	\item		
We say that $\bigb\in\Nat$ is
a \emph{sufficient size-bound for} $n\in\Nat$ and
$g:\Nat_+\rightarrow\Nat_+$ (i.e., for computing $\ell_{n,g}$)
if $g(1)\leq\bigb$ and in the case $n>0$ we also have that 
$\bigb$ is a sufficient size-bound for $n{-}1,g'$ where
$g'$ is defined by~(\ref{eq:gprime}).
\item
We say that $\insegm\in\Nat$ is
a \emph{sufficient segment-bound} for $n\in\Nat$ and
$g:\Nat_+\rightarrow\Nat_+$ (i.e., for computing $\ell_{n,g}$)
if
$\insegm\geq 1$
and in the case $n>0$ we have that
$\insegm\geq 1+\bmelgone+\insegm'$ where 
$\insegm'$ 
is a sufficient segment-bound for $n{-}1,g'$ where
$g'$ is defined by~(\ref{eq:gprime}).
\end{itemize}
Finally we note that Prover can, when given a grammar $\calG$,
present some $n_0\in\Nat$ and $g_0:\Nat_+\rightarrow\Nat_+$ 
(or just the values $g_0(1), g_0(2),\dots,g_0(\insegm)$ for some
$\insegm\in\Nat$) and perform
 the above recursive computation for $\ell_{n_0,g_0}$, while 
 guessing a set 
 $\calC\subseteq(\size_{\leq\bigb}\,\cap\not\sim)$ for some $B\in\Nat$
 that is sufficient for this computation. We note that Prover can
 demonstrate that $\calC\subseteq\not\sim$, and also compute the
 eq-level for each pair in $\calC$ (recall
 Prop.~\ref{prop:negatcase}). 
 For $\Rest=\size_{\leq\bigb}\smallsetminus\noneq$ 
 Prover just claims that 
it is a subset of $\sim$, in which case her computation of
$\ell_{n_0,g_0}$ would be indeed correct; in reality she computes
a value $\ell^{\upC}_{n_0,g_0}$ that is dependent on her choice of
$\calC$.
We let Refuter to challenge 
the assumption $\Rest\subseteq\sim$, 
by choosing a pair from $\Rest$,
so that his least-eqlevel
strategy will still be winning if Prover does not guess $\calC$ 
correctly.
This idea will be
now formalized and embodied in the final game-version.

For $\calC\subseteq\not\sim$ and $b\in\Nat$ we put
\begin{center}
$\melb=\maxeqlev(\size_{\leq b}\,\cap\calC)$.
\end{center}
For triples $(\calC,n,g)$ where
$\calC\subseteq\not\sim$, $n\in\Nat$, $g:\Nat_+\rightarrow\Nat_+$
we define 
$\ell^{\upC}_{n,g}$ by the following recursive definition:
\begin{quote}
	$\ell^{\upC}_{0,g}\ \ \ =\,1+\melgone$, and
\\
$\ell^{\upC}_{n+1,g}=1+\melgone+\ell_{n,g'}$ where 
\end{quote}
\begin{equation}\label{eq:gprimerelC}
g'(j)=
g(1+\melgone+j)+2\cdot(g(1)+\melgone\cdot\sizeinc) \textnormal{ for
all } j\in\Nat_+.
\end{equation}

\begin{itemize}
	\item		
We say that 
		$\bigb\in\Nat$ is
a \emph{sufficient size-bound for}  a triple $(\calC,n,g)$ as above
if $\calC\subseteq \size_{\leq\bigb}$, $g(1)\leq\bigb$,
and in the case  
$n>0$ we also have that
$\bigb$ is a sufficient size-bound for $(\calC,n{-}1,g')$ where
$g'$ is defined by~(\ref{eq:gprimerelC}).
\item
We say that $\insegm\in\Nat$ is
a \emph{sufficient segment-bound} for 
$(\calC,n,g)$
if
$\insegm\geq 1$, and in the case $n>0$ we have that
$\insegm\geq 1+\melgone+\insegm'$ where 
$\insegm'$ 
is a sufficient segment-bound for $(\calC,n{-}1,g')$ where
$g'$ is defined by~(\ref{eq:gprimerelC}).
\end{itemize}

We now derive an analogy of Lemma~\ref{lem:realboundng}:

\begin{lemma}\label{lem:boundng}
Let $\bigb$ be a sufficient size-bound for $(\calC,n,g)$, 
where $\calC\subseteq\not\sim$, $n\in\Nat$,
$g:\Nat_+\rightarrow\Nat_+$,
and let 
$\Rest=\size_{\bigb}\smallsetminus\calC$.
Then any
eqlevel-decreasing $(n,g)$-sequence 
starting with a pair whose eq-level is less than 
$\leasteqlev(\Rest)$ has length at most $\ell^{\upC}_{n,g}$.
\end{lemma}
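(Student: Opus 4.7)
The plan is to prove Lemma~\ref{lem:boundng} by induction on $n$, exactly paralleling the proof of Lemma~\ref{lem:realboundng}, with every occurrence of $\bmelgone$ replaced by $\melgone$. The extra hypothesis that the initial eq-level is strictly below $\leasteqlev(\Rest)$ is what legitimises this replacement: because $\sim$ is a congruence (Prop.~\ref{prop:congruence}(1)) and the sequence is eqlevel-decreasing, the head pair $(E_1,F_1)$ must itself be non-equivalent (otherwise $E_1\sigma\sim F_1\sigma$ too); its presentation size is at most $g(1)\leq\bigb$; and since its eq-level satisfies $\eqlevel(E_1,F_1)\leq\eqlevel(E_1\sigma,F_1\sigma)<\leasteqlev(\Rest)$, the pair cannot lie in $\Rest$ and hence belongs to $\calC$, yielding $\eqlevel(E_1,F_1)\leq\melgone$.

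For the base case $n=0$ the substitution $\sigma$ has empty support, so the sequence equals $(E_1,F_1),\dots,(E_\ell,F_\ell)$, and the observation above directly gives $\ell\leq 1+\melgone=\ell^{\upC}_{0,g}$.

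For the inductive step set $k=\eqlevel(E_1,F_1)$ and $e=\eqlevel(E_1\sigma,F_1\sigma)$. If $k=e$ then $\ell\leq 1+k\leq 1+\melgone\leq\ell^{\upC}_{n,g}$ as in the base case. Otherwise $k<e$, and Propositions~\ref{prop:bisimgetequation} and~\ref{prop:congrlimit} produce $x_i\in\support(\sigma)$ and a regular term $H'$ with $\pressize(H')\leq g(1)+k\cdot\sizeinc\leq g(1)+\melgone\cdot\sizeinc$ such that $\sigma\sim_{e-k}\{(x_i,H')\}\sigma\remxi$. Put $\shift=1+\melgone$ and, for $j=1,\dots,\ell-\shift$, define $(E'_j,F'_j)=(E_{\shift+j}\{(x_i,H')\},F_{\shift+j}\{(x_i,H')\})$. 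Using Prop.~\ref{prop:basicreplace} exactly as in Lemma~\ref{lem:realboundng}, the modified tail $(E'_j\sigma\remxi,F'_j\sigma\remxi)_{j=1}^{\ell-\shift}$ is an eqlevel-decreasing $(n{-}1,g')$-sequence, with $g'$ given by~(\ref{eq:gprimerelC}) and the same pair-wise eq-levels as $(E_{\shift+j}\sigma,F_{\shift+j}\sigma)_{j=1}^{\ell-\shift}$; the size bound $\pressize(E'_j,F'_j)\leq g(\shift+j)+2\pressize(H')\leq g'(j)$ is routine. Its initial eq-level does not exceed $\eqlevel(E_1\sigma,F_1\sigma)$ and therefore remains strictly below $\leasteqlev(\Rest)$, so the induction hypothesis applies to the triple $(\calC,n{-}1,g')$ (the same $\bigb$ being a sufficient size-bound for this triple by the very definition of sufficient size-bound for $(\calC,n,g)$). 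We conclude $\ell-\shift\leq\ell^{\upC}_{n-1,g'}$, and hence $\ell\leq 1+\melgone+\ell^{\upC}_{n-1,g'}=\ell^{\upC}_{n,g}$.

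The only point that requires care, rather than a deep obstacle, is maintaining the invariant ``initial eq-level $<\leasteqlev(\Rest)$'' when passing to the modified tail, and confirming that the bound $\melgone$ may replace $\bmelgone$ throughout. Both issues are resolved by the opening observation: the invariant is transported along because the new initial pair has eq-level no larger than the old one, while the replacement of $\bmelgone$ by $\melgone$ is exactly what the invariant buys us at each level of the recursion.
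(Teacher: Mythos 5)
Your proof is correct and follows essentially the same route as the paper's: the hypothesis on the initial eq-level forces $(E_1,F_1)\in\calC$, which licenses replacing $\bmelgone$ by $\melgone$ throughout the induction of Lemma~\ref{lem:realboundng}. Your explicit remark that the invariant ``initial eq-level $<\leasteqlev(\Rest)$'' is transported to the modified tail (because its eq-levels only decrease) is a point the paper leaves implicit, and it is handled correctly.
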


\begin{proof}
Let the assumption hold, and 
let us have an $(n,g)$-sequence 
\begin{center}
$(E_1\sigma,F_1\sigma),(E_2\sigma,F_2\sigma), \dots, 
(E_\ell\sigma,F_\ell\sigma)$ 
where
$\eqlevel(E_1\sigma,F_1\sigma)<\leasteqlev(\Rest)$.
\end{center}
Since $\eqlevel(E_1,F_1)\leq\eqlevel(E_1\sigma,F_1\sigma)$, 
and $\pressize(E_1,F_1)\leq g(1)\leq B$
(and thus $(E_1,F_1)\in \size_{\leq\bigb}$ and
$\eqlevel(E_1,F_1)<\leasteqlev(\Rest)$, which entails
$(E_1,F_1)\not\in\Rest$),
we must have
$(E_1,F_1)\in \calC$; therefore
$\eqlevel(E_1,F_1)\leq\melgone$.

If $n=0$, or more generally 
if $\eqlevel(E_1,F_1)=\eqlevel(E_1\sigma,F_1\sigma)$, 
then 
\begin{center}
$\ell\leq 1{+}\eqlevel(E_1,F_1)\leq 1{+}\melgone=\ell^{\upC}_{0,g}$.
\end{center}
If ($n>0$ and) $\eqlevel(E_1,F_1)=k<e=\eqlevel(E_1\sigma,F_1\sigma)$, then 
\begin{center}
$\sigma\sim_{e-k}\{(x_i,H')\}\sigma\remxi$ 
\end{center}
for some $x_i\in\support(\sigma)$ and some $H'$ with 
$\pressize(H')\leq g(1)+k\cdot\sizeinc\leq g(1)+\melgone\cdot\sizeinc$
(by Prop.~\ref{prop:bisimgetequation} and~\ref{prop:congrlimit}).

We now put (\emph{shift}) $\shift=1{+}\melgone$;
hence $\shift>\eqlevel(E_1,F_1)=k$, and thus
\begin{center}
$e{-}k>\eqlevel(E_{\shift+1}\sigma,F_{\shift+1}\sigma)>
\cdots >
\eqlevel(E_{\shift+(\ell-\shift)}\sigma,F_{\shift+(\ell-\shift)}\sigma)$.
\end{center}
For $j=1,2,\dots,\ell-\shift$
we define
$(E'_j,F'_j)=(E_{\shift+j}\{(x_i,H')\}, F_{\shift+j}\{(x_i,H')\})$;
we thus have  $\pressize(E'_j,F'_j)\leq g'(j)$ for
 $g'$ defined by~(\ref{eq:gprimerelC}).
We have 
$(E'_j\sigma,F'_j\sigma)=(E'_j\sigma\remxi,F'_j\sigma\remxi)$, and
by using Prop.~\ref{prop:basicreplace} we also derive
\begin{center}
$\eqlevel(E'_j\sigma\remxi,F'_j\sigma\remxi)=
\eqlevel(E_{\shift+j}\sigma, F_{\shift+j}\sigma)$.
\end{center}
Hence the sequence
\begin{center}
$(E'_1\sigma\remxi,F'_1\sigma\remxi),(E'_2\sigma\remxi,F'_2\sigma\remxi),\dots,
(E'_{\ell-\shift}\sigma\remxi,F'_{\ell-\shift}\sigma\remxi)$
\end{center}
is an eqlevel-decreasing $(n{-}1,g')$-sequence,
and $\bigb$ is sufficient for $(\calC,n{-}1,g')$ (since $\bigb$ is assumed 
sufficient for $(\calC,n,g)$).
The induction hypothesis thus implies that
$\ell{-}\shift\leq\ell^{\upC}_{n-1,g'}$,
and thus $\ell\leq
1+\melgone+\ell^{\upC}_{n-1,g'}=\ell^{\upC}_{n,g}$.
\qed
\end{proof}

\textbf{Prover-Refuter game (third version).}
We separate
$\calG$ from the initial pair, now denoted $(E_0,F_0)$,
to stress that the initial phase depends on $\calG$ only.

\begin{enumerate}[i)]
	\item
		A grammar $\calG=(\calN,\act,\calR)$ is given.
	\item 
Prover 
provides  
some finite set $\calC\subseteq\trees_\calN\times\trees_\calN$,
some $n_0\in\Nat$, a sequence of increasing values denoted $g_0(1),
g_0(2),\dots, g_0(\insegm)$ 
for some $\insegm\in\Nat$, and some $\bigb\in\Nat$ such that 
$\calC\subseteq\size_{\leq\bigb}$.
For each pair $(E,F)\in\calC$ Prover provides $e\in\Nat$ and 
demonstrates that $\eqlevel(E,F)=e$ (recall Prop.~\ref{prop:negatcase});
thus $\calC\subseteq\,\not\sim$.
Prover now computes $\ell^{\upC}_{n_0,g_0}$, using the recursive
definition given before~(\ref{eq:gprimerelC});
this fails when $\bigb$ or $\insegm$ are not sufficiently large.

\item 
An initial pair $(E_0,F_0)$ is given.
\item
For $\Rest=\size_{\leq\bigb}\smallsetminus\noneq$,
	Refuter chooses 
 $(T_0,U_0)$
from $\{(E_0,F_0)\}\cup\Rest$ 
(with the least eq-level when using
the least-eqlevel strategy).
\item 
Now a play of the second game-version starts with  $(T_0,U_0)$. 
A new feature is that Prover can claim her win when she shows that 
$(T_1,U_1), (T_2,U_2), \dots $ contains an $(n,g)$-subsequence that is
longer than $\ell^{\upC}_{n_0,g_0}$.
\end{enumerate}
The least-eqlevel strategy still guarantees
Refuter's win for $E_0\not\sim F_0$; Prover can never win by the new
game-rule (i.e., by exceeding $\ell^{\upC}_{n_0,g_0}$), due to
Lemma~\ref{lem:boundng}. On the other hand, Prover
can correctly guess 
$\calC=\size_{\leq \bigb}\,\cap\not\sim$
for $\bigb$ that is sufficient for computing (the real) $\ell_{n_0,g_0}$ 
(related to $n_0,g_0$ that are guaranteed for $\calG$ by 
Lemma~\ref{lem:forcingngseq}), and she can force her win
when $E_0\sim F_0$.

Since a winning strategy of Prover (for any $\calG, E_0, F_0$ where
$E_0\sim F_0$)
is finitely presentable and
effectively verifiable
(which easily follows from the fact that Refuter always has only
finitely many options when it is his turn),
a proof of Theorem~\ref{th:bisdecid} is now
clear.

\section{Additional Remarks}\label{sec:addrem}

Theorem~\ref{th:bisdecid} just states the existence of an algorithm
deciding \emph{bisimulation equivalence}
of first-order grammars, or, in more detail,
computing the respective eq-levels.
But the proof can be surely adapted to more general statements.
It would be a technical
exercise to phrase the proof in some more general terms, not referring
to bisimilarity. E.g., we could speak about some more general
(stratified) equivalence with a related 
notion of covering $\calB\iscov\calB'$ 
with some properties like those captured in 
Prop.~\ref{prop:simplecovering}, etc. 
As usual, a question in such cases is to what extent it makes good
sense. E.g., do we get new worthwhile decidability results in such a
way?

If we look at a (straightforward) transformation from pushdown
automata (PDA) to
FO-grammars (given here in Appendix for completeness), we note how
FO-grammars ``swallow'' deterministic popping $\varepsilon$-steps in PDA.
(If there is no other rule
for $A$ than $A(x_1,\dots,x_m)\gt{\varepsilon}x_i$, then any (sub)term 
$A(G_1,\dots,G_m)$ can be immediately replaced with $G_i$.)
A question posed by Stirling was if bisimilarity
of PDA with just popping 
$\varepsilon$-steps (where some nondeterminism is allowed) is still
decidable. 
This was answered negatively in~\cite{DBLP:journals/jacm/JancarS08}.

In our term-framework we extend the action set in
$\calG=(\calN,\act,\calR)$ with a \emph{silent action}, 
denoted $\varepsilon$, and we also allow 
\emph{$\varepsilon$-rules} 
 $A(x_1,\dots,x_m)\gt{\varepsilon}E$.
 The associated LTS is then
 $\calL^{\ltsact}_\calG=(\trees_\calN,\act\cup\{\varepsilon\},
 (\gt{a})_{a\in\act\cup\{\varepsilon\}})$ that naturally extends the
LTS $\calL^{\ltsact}_\calG$ defined for the case with no
$\varepsilon$-rules.
The \emph{collapsed LTS} 
 $\calL^{\ltsact{-}\coll}_\calG$ arises from
 $\calL^{\ltsact}_\calG$ by ``swallowing'' the
 $\varepsilon$-transitions, i.e., we have no $\varepsilon$-transitions
 in $\calL^{\ltsact{-}\coll}_\calG$, and
 $F\gt{a}H$ in $\calL^{\ltsact{-}\coll}_\calG$ if 
 in  $\calL^{\ltsact}_\calG$ 
 there is a path of the form
 \begin{equation}\label{eq:epsswallow}
F=F_0\gt{\varepsilon}F_1\gt{\varepsilon}\cdots
\gt{\varepsilon}F_{k_1}\gt{a}
H_0\gt{\varepsilon}H_1\gt{\varepsilon}\cdots
\gt{\varepsilon}H_{k_2}=H.
\end{equation}
A construction in~\cite{DBLP:journals/jacm/JancarS08} shows that 
bisimilarity in 
$\calL^{\ltsact{-}\coll}_\calG$ is undecidable, even
when all $\varepsilon$-rules are popping, i.e. of the form
$A(x_1,\dots,x_m)\gt{\varepsilon}x_i$.
As also noted in~\cite{DBLP:journals/jacm/JancarS08}, 
the same proof construction 
also shows that 
\emph{weak bisimilarity} (for PDA or for $\calL^{\ltsact}_\calG$
of FO-grammars where silent
popping moves are allowed) is undecidable.

\begin{quote}
	{\small
In fact, the construction for undecidability 
in~\cite{DBLP:journals/jacm/JancarS08} works also when we
do not include the silent ``post-transitions'', i.e., if we require
$k_2=0$ 
in~(\ref{eq:epsswallow}); thus the undecidability 
also holds for the respective equivalence
that is finer than
weak bisimilarity.
	}		
\end{quote}	
The undecidability results have been  recently refined, using
 branching bisimilarity~\cite{DBLP:journals/corr/YinFHHT14}.

\begin{quote}
{\small
In branching bisimilarity we also exclude the silent
``post-transitions'' (as mentioned above) but there is also a
``semantical'' constraint: the silent ``pre-transitions'' are supposed
to be not changing the equivalence-class.
Formally, given $\calG=(\calN,\act,\calR)$, where rules 
$A(x_1,\dots,x_m)\gt{\varepsilon}G$ are allowed,
we can define a \emph{branching bisimulation} 
in the (non-collapsed) LTS $\calL^{\ltsact}_\calG$
as a symmetric relation 
$\calB\subseteq\trees_\calN\times\trees_\calN$
where for each 
move $E\gt{a}E'$ in a pair $(E,F)\in\calB$, for
$a\in\act\cup\{\varepsilon\}$, 
there is a sequence, a \emph{response},
$F=F_0\gt{\varepsilon}F_1\gt{\varepsilon}\cdots
\gt{\varepsilon}F_k\gt{a}F'$ such that
$(E,F_i)\in\calB$ for all
$i\in[0,k]$, and $(E',F')\in\calB$; if $a=\varepsilon$, then 
it suffices that $(E',F)\in\calB$ (i.e., the response might be empty). 
}
\end{quote}
For ``pushing'' $\varepsilon$-rules the construction for
undecidability from~\cite{DBLP:journals/jacm/JancarS08} can be again
easily adapted to branching bisimilarity. But if we only allow 
popping  $\varepsilon$-rules, of the form 
$A(x_1,\dots,x_m)\gt{\varepsilon}x_i$, then the construction from
~\cite{DBLP:journals/jacm/JancarS08} cannot be used for branching
bisimilarity; there the silent
``pre-transitions'' do not keep the same equivalence class. 
This was noted by
 Y. Fu and Q. Yin~\cite{yuxi-pdadecid-14} who 
announced a result that
would translate in our setting as
the \emph{decidability of branching bisimilarity
of FO-grammars with popping $\varepsilon$-rules}.
\begin{quote}
{\small
In fact, ~\cite{yuxi-pdadecid-14} also announces the decidability for 
pushing $\varepsilon$-rules in the context of so called normed PDA
processes. The crucial idea is that the above ``responses''
$F=F_0\gt{\varepsilon}F_1\gt{\varepsilon}\cdots
\gt{\varepsilon}F_k\gt{a}F'$ to the moves $E\gt{a}E'$ can be bounded
in this context;
i.e., Prover gets again only boundedly many possibilities how to cover
a given finite set $\calB$ (in the adapted version of $\calB\iscov
\calB'$), which allows us to proceed essentially 
in the way that we used in this paper.
}
\end{quote}
Hence Y. Fu and Q. Yin have noticed that it indeed makes good 
sense to try to
adapt the decidability proof for bisimilarity to get further 
results. An adaptation of an existing proof seems necessary,
since the branching bisimilarity problem that
they study, in particular for PDA with popping $\varepsilon$-steps,
does not seem to be easily reducible 
to the known decidable (bisimilarity) problem; one has thus to 
delve into the existing proofs, looking for their possible adaptations.

Y. Fu and Q. Yin have chosen to build 
on Stirling's paper~\cite{stirling-pda-00}. They adapt Stirling's
tableau approach to a new model that they invented. In fact, 
when one looks at 
their
technical model, it seems clear that it could be smoothly replaced 
with the first-order-term framework used here 
(and already in~\cite{JancarLICS12}, i.e.,
in the paper of which the authors of~\cite{yuxi-pdadecid-14}
became aware only afterwards, as they say in their conclusions).
Analysing their procedure and its relation to our Prover-Refuter game
would require a nontrivial technical work; here 
we thus suggest 
a direct adaptation of the game that captures the announced result.

\subsubsection*{Adaptation of Prover-Refuter game.}
If we want to use our framework of the Prover-Refuter game
directly to branching bisimilarity
of FO-grammars with popping $\varepsilon$-rules, 
thus modifying the relation $\calB\iscov\calB'$ accordingly,
we encounter a
technical problem. Though in a pair $(E,F)$ each move 
$E\gt{a}E'$ ($a\in \act\cup\{\varepsilon\}$) still
has 
only finitely many possible responses 
$F=F_0\gt{\varepsilon}F_1\gt{\varepsilon}\cdots
\gt{\varepsilon}F_k\gt{a}F'$,
their number is not bounded (by a quantity determined by the
underlying grammar $\calG$), since the responses might
be sinking to subterms of $F$ in unbounded depths. 
This causes, e.g., that the bal-result related to a pivot might not be
``boundedly close'' to the pivot. But we can require that Prover
avoids such unbounded responses; she can always tell,
whenever she presents a new (sub)term $V$, 
if $V$ is equivalent with a root-successor $V'$ in $V$,
and she must then behave consistently with her claims;
we can imagine that she colours the respective arcs (form the root of
$V$ to the root of $V'$) as ``\emph{blue}''.
\begin{quote}
	{\small
	Recall that any response
$F=F_0\gt{\varepsilon}F_1\gt{\varepsilon}\cdots
\gt{\varepsilon}F_k\gt{a}F'$ should not change the equivalence class
when ``sinking'' from $F$ to $F_k$; this sinking can be done only along such
blue arcs if Prover colours the arcs correctly.
}
\end{quote}
Such a blue arc, i.e. a claim that $V\sim V'$ where $V'$ is a root-successor in $V$,
can be also challenged by Refuter, 
 but it can be used (later)
for replacing $V$ with $V'$ when this should not affect
the current eq-level,
if Refuter uses the least-eqlevel strategy.
In this way the respective possible transitions
$V\gt{\varepsilon}V'$ (where $V\sim
V'$) are
also ``swallowed'', similarly as deterministic popping
$\varepsilon$-steps, which are swallowed ``automatically''. 
\begin{quote}
	{\small
Recall that if there is no other rule
for $A$ than $A(x_1,\dots,x_m)\gt{\varepsilon}x_i$, then any (sub)term 
$A(G_1,\dots,G_m)$ can be immediately replaced with $G_i$.
}
\end{quote}
We thus recover the ``bounded-closeness'' properties, and we can
accordingly adapt the proof that was used in the case with no
$\varepsilon$-steps. 

Below we suggest a possible way how to formalize the above idea
 of ``blue arcs''.
We stay in the framework of bisimilarity of FO-grammars (with
no $\varepsilon$-rules),
but the decidability for branching bisimilarity
(of FO-grammars with popping $\varepsilon$-steps)
follows routinely  after this adaptation.
We thus consider a grammar $\calG=(\calN,\act,\calR)$ and the
Prover-Refuter game as they were defined previously.

\textbf{Quotient graph-presentations, and related decompositions}.
\\
We now stress more explicitly that
we deal with finite objects in the Prover-Refuter game, i.e., with 
graph-presentations (of terms),
called just \emph{graphs} in what follows, rather than 
with the terms themselves. We
use  symbols  $\pE, \pF,\dots,\pT,\pU,\dots$ to range over graphs
(while $E, F,\dots,T,U,\dots$ range over terms).
\begin{quote}
	{\small
 In fact, we can take graphs as the states in
the LTSs $\calL^\ltsrul_\calG$,
$\calL^\ltsact_\calG$;
we made clear how the transitions look like in this case.
Nevertheless, our aim is to convey the main idea, not 
to delve into unnecessary technicalities.
}
\end{quote}
Each graph $\pV$ has finitely many nodes,
and 
each node $\gnode$ corresponds to the term $\term(\gnode)$ rooted in
$\gnode$. 

We write shortly $\gnode_1\sim_k\gnode_2$
instead of $\term(\gnode_1)\sim_k\term(\gnode_2)$; 
similarly
we write  $\eqlevel(\gnode_1,\gnode_2)$
instead of $\eqlevel(\term(\gnode_1),\term(\gnode_2))$.

By $\area(\gnode)$ for a node $\gnode$ of $\pV$ we mean the
restriction of $\pV$ to the nodes occurring on (directed) paths in
$\pV$ that start in $\gnode$. 
\begin{quote}
	{\small
We thus have a correspondence (not necessarily one-to-one)
between 
the nodes in $\area(\gnode)$ and the subterms of $\term(\gnode)$.
}
\end{quote}
Let us now consider a graph $\pV$ with a partition $\calP$ of
its nodes. By $\pV\partit$ we denote a (chosen)
\emph{quotient of} $\pV$ \emph{w.r.t.
$\calP$} arising as follows: In each partition-class of $\calP$ 
we choose a
representant-node; the nodes
of $\pV\partit$ are the representant-nodes of all partition
classes, and if an arc from a representant leads to a
non-representant $\graphnode$, then we redirect it to the
representant $\graphnode'$ of the partition-class containing
$\graphnode$.
We thus also get a mapping $\redP$, ``\emph{reducing}'' each node
$\gnode$ of $\pV$ to $\redP(\gnode)$, which is the node in
$\pV\partit$ representing the partition-class of $\gnode$.

For a node $\graphnode$ of $\pV$, by $\redPone(\gnode)$
(``depth-$1$-reducing of $\gnode$'') we mean (a copy of)
the node $\gnode$ in the
graph arising as follows: we take a disjoint union of $\pV$ and
$\pV\partit$, where  
each outgoing arc of $\gnode$ in $\pV$, leading to 
some $\gnode'$, is redirected to the node $\redP(\gnode')$ in 
$\pV\partit$. Thus the nodes in $\area(\redPone(\gnode))$ are taken
from $\pV\partit$, with the exception of the root. 

We now define the \emph{decomposition of $\pV$ by $\calP$}:
\begin{center}
$\decP(\pV)=\{(\redPone(\gnode),\redP(\gnode))\mid \gnode$ is a
node in $\pV\}$.
\end{center}
Hence $\decP(\pV)$ is a set of pairs of nodes in a graph; the graph
arises from
$\pV\partit$ by adding copies of the nodes from $\pV$ whose outgoing
arcs are directed into $\pV\partit$.

\begin{proposition}\label{prop:declowerEL}
Let $\pV$ be a graph and $\calP$ a partition of its nodes.
If $\gnode_1,\gnode_2$ are in the same
partition-class of $\calP$, then
$\eqlevel(\gnode_1,\gnode_2)\geq\leasteqlev(\decP(\pV))$.
\end{proposition}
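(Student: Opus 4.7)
The plan is to set $m = \leasteqlev(\decP(\pV))$ and to reduce the proposition to the stronger pointwise claim that every node $\gnode$ of $\pV$ satisfies $\term(\gnode) \sim_m \term(\redP(\gnode))$. Once this is in hand, two nodes $\gnode_1, \gnode_2$ in the same partition-class of $\calP$ share the common representative $\redP(\gnode_1) = \redP(\gnode_2)$, so by symmetry and transitivity of $\sim_m$ we obtain $\term(\gnode_1) \sim_m \term(\gnode_2)$, i.e., $\eqlevel(\gnode_1, \gnode_2) \geq m$, as required.

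To prove the pointwise claim I would induct on $k$ from $0$ up to $m$ (or through all $k \in \Nat$ if $m = \omega$), establishing simultaneously, for every node $\gnode$ of $\pV$, that $\term(\gnode) \sim_k \term(\redP(\gnode))$. The base $k = 0$ is trivial. For the step $k \to k+1$ with $k+1 \leq m$, fix a node $\gnode$ and let $\gnode'_1, \dots, \gnode'_r$ be its children in $\pV$. The inductive hypothesis applied to each child gives $\term(\gnode'_i) \sim_k \term(\redP(\gnode'_i))$, so the substitutions $\sigma, \sigma'$ defined by $x_i\sigma = \term(\gnode'_i)$ and $x_i\sigma' = \term(\redP(\gnode'_i))$ satisfy $\sigma \sim_k \sigma'$. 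If the label of $\gnode$ is a nonterminal $A$, then $A(x_1, \dots, x_r) \notin \var$ and the ``moreover'' clause of Prop.~\ref{prop:congruence}(2) yields $\term(\gnode) = A(x_1, \dots, x_r)\sigma \sim_{k+1} A(x_1, \dots, x_r)\sigma' = \term(\redPone(\gnode))$. Combining this with $\term(\redPone(\gnode)) \sim_m \term(\redP(\gnode))$, which holds because $(\redPone(\gnode), \redP(\gnode)) \in \decP(\pV)$ and $\leasteqlev(\decP(\pV)) = m \geq k+1$, transitivity closes the step.

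The only piece requiring separate care is the leaf case, where $\gnode$ carries a variable label (or a $0$-ary nonterminal label) and hence has no children: here $\redPone(\gnode)$ has no arcs to redirect either, so $\term(\gnode) = \term(\redPone(\gnode))$, and the conclusion $\term(\gnode) \sim_{k+1} \term(\redP(\gnode))$ is inherited directly from the hypothesis on $\decP(\pV)$ without invoking congruence. I do not anticipate a deeper technical obstacle; the whole argument is a clean induction built on the congruence property Prop.~\ref{prop:congruence}(2) plus transitivity of $\sim_{k+1}$, with the bookkeeping observation that the common representative $\redP(\gnode)$ mediates between any two elements of a class.
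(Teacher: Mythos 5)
Your proposal is correct and takes essentially the same route as the paper: both hinge on the ``moreover'' clause of Prop.~\ref{prop:congruence}(2) to get $\gnode\sim_{k+1}\redPone(\gnode)$ from the children being $\sim_k$-related to their representatives, and then chain through the pair $(\redPone(\gnode),\redP(\gnode))\in\decP(\pV)$ by transitivity. The paper merely packages your induction on $k$ as a one-step minimality argument (letting $e'$ be the least $\eqlevel(\gnode,\redP(\gnode))$ over all nodes and deriving $e'\geq\leasteqlev(\decP(\pV))$ from the cycle $\gnode\sim_{e'}\redP(\gnode)\sim_{e}\redPone(\gnode)\sim_{e'+1}\gnode$), which is the same content in condensed form.
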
	

\begin{proof}
Suppose some $\pV$, $\calP$; let  $\leasteqlev(\decP(\pV))=e$.
By the congruence properties 
we derive for each node $\gnode$ in $\pV$ that 
\begin{center}
$\eqlevel(\gnode,\redP(\gnode))
\geq e$
and $\eqlevel(\gnode,\redPone(\gnode))
\geq e$.
\end{center} 
\begin{quote}
	{\small	
We can show this as follows.
Let $e'=\leasteqlev(\{(\gnode,\redP(\gnode))\mid \gnode$ is a
 node in $\pV\})$. 
 Then $\gnode\sim_{e'+1}\redPone(\gnode)$, 
 by Prop.~\ref{prop:congruence}(2). We thus 
have
$\gnode\sim_{e'}\redP(\gnode)\sim_{e}\redPone(\gnode)\sim_{e'+1}\gnode$,
which implies $e'\geq e$.
}
\end{quote}
For $\gnode_1,\gnode_2$ where $\redP(\gnode_1)=\redP(\gnode_2)$ 
we thus have 
\begin{center}
$\gnode_1\sim_e \redPone(\gnode_1)\sim_e \redP(\gnode_1)
=\redP(\gnode_2)\sim_e \gnode_2$.
\end{center}
\qed
\end{proof}
We will particularly use the decompositions of
graphs $\pV$ that are induced by sets of arcs in $\pV$ 
(later called ``blue arcs'' or ``red arcs'').
Suppose $\pV$ and a set of its arcs; we call the arcs in the set
``blue''. This defines the least partition 
where the source-node and the target-node of any blue arc are in the
same partition-class.
\begin{quote}
	{\small	
A partition $\calP$ of the nodes of $\pV$
determines the set $\{(\gnode_1,\gnode_2)\mid \gnode_1,\gnode_2$ are
in the same partition-class of $\calP\}$. Hence partitions can be
naturally ordered by inclusion; we refer to this order when saying ``the
least partition such that ...''.
In the above case, two nodes are in the same partition-class iff there
is a ``blue-path'' between them in the \emph{undirected} graph version.
	}
\end{quote}	

\textbf{Modified Prover-Refuter game.}
Let us recall the second version of the game. We modify
it as follows.
\begin{enumerate}
	\item
We denote the initial pair $(\overline{T},\overline{U})$, and assume
that it is given 
by a graph $\pV$ with two designated nodes $\gnode'$, $\gnode''$
where 
$\term(\gnode')=\overline{T}$ and $\term(\gnode'')=\overline{U}$.

Prover now suggests a partition $\calP$ 
of the set of nodes of $\pV$ (generally, not necessarily by ``blue
arcs'') where
$\gnode', \gnode''$ must be in the same partition-class.
Now $\pV\partit$ is created, where each non-loop arc
is coloured \emph{black};
in this way Prover claims that $\gnode_1\not\sim\gnode_2$ for the
source-node $\gnode_1$ and the target-node $\gnode_2$.

\begin{quote}
	{\small
Any loop-arc from $\gnode$ to $\gnode$ trivially satisfies that its
source-node and its target-node are equivalent; 
we further ignore such
arcs in our discussion.
We also note that Prover could even demonstrate that $\gnode_1\not\sim\gnode_2$
but this is not
necessary here.
	}		
\end{quote}	
Prover claims 
that the above partition $\calP$ 
is induced by the bisimulation
equivalence; she thus also claims that $\leasteqlev(\decP(\pV))=\omega$.

Refuter now chooses a pair 
$(\redPone(\gnode),\redP(\gnode))$
from  $\decP(\pV)$, corresponding to a pair
$(T_0,U_0)$ of terms.
\begin{quote}
	{\small
If Refuter uses the least-eqlevel strategy, we have 
$\eqlevel(T_0,U_0)\leq\eqlevel(\overline{T},\overline{U})$
(by Prop.~\ref{prop:declowerEL}).
	}		
\end{quote}
The pair $(T_0,U_0)$ is thus, in fact, given by a pair
$(\gnode_{01},\gnode_{02})$ of nodes of a graph $\pV_0$ where 
only the outgoing arcs of $\gnode_{01}$  might be not
black (when ignoring the loop-arcs). Prover is supposed to colour each
outgoing arc of $\gnode_{01}$
as ``\emph{blue}'' iff its source-node and its target-node
are bisimilar.

\item
Prover will use a strategy (corresponding to the strategy
in the proof of Lemma~\ref{lem:forcingngseq}) that also guarantees  
that the $(i{+}1)$-th round starts with
a pair $(T_i,U_i)$ given by two nodes $\gnode_{i1},\gnode_{i2}$
in some graph $\pV_i$ where 
the arcs are coloured black or blue (ignoring the loop-arcs), 
and where a cycle in $\pV_i$ never contains a blue arc, and each blue arc is 
in a bounded distance (depth) from $\gnode_{i1}$ or $\gnode_{i2}$.
\begin{quote}
	{\small
As previously, by a ``bounded'' depth we mean that the respective
bound is determined by the underlying grammar $\calG$.
	}		
\end{quote}
\item
Whenever Prover presents a new graph (in the sets $\calB_j$), she is
supposed to colour each arc, from $\gnode$ to $\gnode'$, \emph{blue} 
if $\gnode\sim\gnode'$; by black arcs she claims non-equivalence.
She must be consistent with her previous choices.

Now the sets $\calB_j$ contain graphs with two designated nodes, and
with coloured arcs.
Refuter thus chooses $(T'_i,U'_i)$ by choosing a graph $\pV'_i$
with two designated nodes $\gnode'_{i1},\gnode'_{i2}$.

\item
If Prover does not make a balancing step, 
in the $(i{+}1)$-th round after Refuter has chosen $\pV'_i$
with  $\gnode'_{i1},\gnode'_{i2}$, then 
we define the partition
$\calP$ of the nodes in $\pV'_i$ as the least partition containing 
$(\gnode'_{i1},\gnode'_{i2})$ and the source-target pairs of all blue
arcs.
Refuter chooses a pair $(\redPone(\gnode),\redP(\gnode))$
from 
$\decP(\pV'_i)$, which presents the pair $(T_{i+1},U_{i+1})$.

\item 
Suppose that Prover makes a balancing step, say a left one, 
corresponding to replacing $(T'_i,U'_i)=(G\sigma,U'_i)$ with 
$(G\sigma',U'_i)$; this is naturally implemented 
in the graph, and we get a graph $\pV''_i$ instead of $\pV'_i$. 
In  $\pV''_i$ we recolour the blue arcs in $\area(\gnode'_{i2})$
(in the $U'_i$-area) to \emph{red}; we perform such a 
blue-to-red recolouring
also in the ``$\range(\sigma')$-area'', i.e., in $\area(\gnode)$
for  each
$\gnode$ that corresponds to the root of some $V'\in\range(\sigma')$.

Now we define the partition
$\calP$ of the nodes in $\pV''_i$ as the least partition containing 
$(\gnode'_{i1},\gnode'_{i2})$ and the source-target pairs of all
\emph{red} arcs.
Refuter chooses a pair $(\redPone(\gnode),\redP(\gnode))$
from 
$\decP(\pV''_i)$, which presents the pair $(T_{i+1},U_{i+1})$.
\begin{quote}
	{\small
		In the respective graph $\pV_{i+1}$ we have no red
		arcs but there can be the blue arcs inherited 
		from the special
		head $G$, which has a bounded height. 
	}		
\end{quote}

\end{enumerate}
When Prover uses the strategy described 
in the proof of Lemma~\ref{lem:forcingngseq}, while also guessing the
blue arcs correctly, she cannot lose when
$\overline{T}\sim\overline{U}$.  
It is a routine to verify that any bal-result 
is still ``boundedly-close'' to its respective pivot.
Now the pivots $W, W'$ of two consecutive balancings
(not necessarily in two consecutive rounds) might not satisfy that
$W'$ is boundedly reachable from a subterm of $W$, but $W'$ arises
from a term  boundedly reachable from a subterm of $W$ by some
replacings of bounded-depth subterms with other bounded-depth
subterms. This fact enables to derive Lemma~\ref{lem:forcingngseq}
as previously.

In the case with no $\varepsilon$-rules, the ``machinery'' of blue
arcs is not needed. But it makes sense when we consider branching
bisimilarity in the case of popping $\varepsilon$-rules. 
There the relation $\calB\iscov\calB'$ is modified appropriately, and
we allow Prover to use only responses 
$F=F_0\gt{\varepsilon}F_1\gt{\varepsilon}\cdots
\gt{\varepsilon}F_k\gt{a}F'$
where the ``pre-transitions'' 
can only sink 
along blue arcs, and thus always into bounded depths.
The decidability proof can be then finished  analogously to the 
case with no $\varepsilon$-rules.

\smallskip

\noindent
\textbf{Further remarks on related research.}
Further work is needed to fully understand the discussed problems.
E.g., even the case 
of BPA processes, generated by real-time PDA with
a single control-state, is not quite clear.
Here the bisimilarity
problem is EXPTIME-hard~\cite{Kiefer13} and in 2-EXPTIME~\cite{DBLP:conf/mfcs/BurkartCS95} 
(proven explicitly in~\cite{Jan12b}); for the subclass of normed BPA
the problem is polynomial~\cite{HiJeMo96}
(see~\cite{CzLa10} for the best published upper bound).

\bibliographystyle{splncs03}
\bibliography{root}

\subsection*{Appendix}

\textbf{A transformation of PDA  to first-order grammars.}
\\
By a \emph{pushdown automaton} (PDA) we mean a structure
$\calM=(Q,\Gamma,\act,\Delta)$ where $Q, \Gamma,\act$ are finite sets of \emph{control
states}, of \emph{stack symbols}, and of \emph{actions} (or
\emph{input letters}), respectively; $\Delta$ is a finite set
of \emph{pushdown-rules} of the form $pX\gt{a}q\alpha$ where $p,q\in
Q$, $X\in\Gamma$, $\alpha\in\Gamma^*$, and
$a\in\act\cup\{\varepsilon\}$.
By a \emph{configuration} we mean any string $q\beta\bot$ where $q\in Q$,
$\beta\in\Gamma^*$, and $\bot$ is a special \emph{bottom-of-the-stack
symbol} (where $\bot\not\in\Gamma$). 

A PDA $\calM=(Q,\Gamma,\act,\Delta)$ has the associated LTS
\begin{center}
$\calL_\calM=(\conf,\act\cup\{\varepsilon\},(\gt{a})_{a\in\act\cup\{\varepsilon\}})$
\end{center}
where $\conf$ is the set of configurations, and the transitions are
induced by the pushdown-rules as follows:
\begin{center}
if $pX\gt{a}q\alpha$ is in $\Delta$, then 
$pX\beta\bot\gt{a}q\alpha\beta\bot$ for any $\beta\in\Gamma^*$.
\end{center}

Suppose $Q=\{q_1,q_2,\dots,q_m\}$. Then 
a configuration $q_iY_1Y_2\dots Y_k\bot$ 
can be naturally viewed as the term $\calT(q_iY_1Y_2\dots Y_k\bot)$ defined
inductively by the points $1$ and $2$ below. 
Hence we view each pair $(q_i,Y)$ of a control state and a stack symbol as 
a nonterminal $[q_iY]$ with arity $m$; a special case is $\bot$ with arity $0$.
A pushdown rule $q_iY\gt{a}q_j\beta$ is rewritten to 
 $q_iYx\gt{a}q_j\beta x$ for a special formal symbol $x$, and
\begin{center}
 $q_iYx\gt{a}q_j\beta x$ 
 is transformed to  $\calT(q_iYx)\gt{a}\calT(q_j\beta x)$, 
\end{center}
where we also use the point $3$ below:
\begin{enumerate}
	\item
		$\calT(q_i\bot)=\bot$,
	\item
		$\calT(q_iY\alpha)=[q_iY](\calT(q_1\alpha),\calT(q_2\alpha),
\dots, \calT(q_m\alpha))$.
	\item
		$\calT(q_jx)=x_j$.
\end{enumerate}
Hence
$\calT(q_iYx)=[q_iY](x_1,x_2,\dots,x_m)$.
In fact, we can modify the operator $\calT$ 
for \emph{deterministic popping $\varepsilon$-rules}: 
If there is no other pushdown-rule for $q_i,Y$ than 
$q_iY\gt{\varepsilon}q_j$, then instead of creating 
the grammar rule
$\calT(q_iYx)\gt{\varepsilon}\calT(q_jx)$ 
we might modify the transformation $\calT$ by putting
$\calT(q_iY\alpha)=\calT(q_j\alpha)$; we have thus
``\emph{swallowed}'' the respective $\varepsilon$-step.
(The branches of the syntactic tree of 
 $\calT(q\alpha)$ can have varying lengths in this case.)

We thus do not need $\varepsilon$-rules in FO-grammars for expressing
PDA where only deterministic popping $\varepsilon$-moves are allowed.

\end{document}